\newtheorem{lemma}{Lemma}
\newtheorem{theorem}{Theorem}
\DeclareMathOperator*{\argmax}{arg\,max}
\DeclareMathOperator*{\argmin}{arg\,min}
\newcommand{\OPT}{\mathsf{OPT}}
\newcommand{\ALG}{\mathsf{ALG}}
\title{Quickly Determining Who Won an Election}
\author{
	Lisa Hellerstein\thanks{Department of Computer Science and Engineering, New York University Tandon School of Engineering, New York, United States. Email \texttt{lisa.hellerstein@nyu.edu}.  Partially supported by NSF Award IIS-1909335.}
	\and
	Naifeng Liu\thanks{Department of Computer Science, CUNY Graduate Center; Department of Economics, University of Mannheim. Email \texttt{nliu@mail.uni-mannheim.de}. Partially supported by the Teaching Fellowship at City University of New York.}
	\and
	Kevin Schewior\thanks{Department of Mathematics and Computer Science, University of Southern Denmark. Email \texttt{kevs@sdu.dk}. Partially supported by the Independent Research Fund Denmark, Natural Sciences, Grant DFF-0135-00018B.}
}
\date{}
\begin{document}

\maketitle

\begin{abstract}
This paper considers elections in which voters choose one candidate each, independently according to known probability distributions. A candidate receiving a strict majority (absolute or relative, depending on the version) wins. After the voters have made their choices, each vote can be inspected to determine which candidate received that vote.
The time (or cost) to inspect each of the votes is known in advance. The task is to (possibly adaptively) determine the order in which to inspect the votes, so as to minimize the expected time to determine which candidate has won the election. We design polynomial-time constant-factor approximation algorithms for both the absolute-majority and the relative-majority version. Both algorithms are based on a two-phase approach. In the first phase, the algorithms reduce the number of relevant candidates to $O(1)$, and in the second phase they utilize techniques from the literature on stochastic function evaluation to handle the remaining candidates. In the case of absolute majority, we show that the same can be achieved with only two rounds of adaptivity.                                                                                                                                                                                                                                                  
\end{abstract}

\section{Introduction}
We introduce the following problem. Consider an election with voters $1,\dots,n$ who each vote for a single candidate out of $1,\dots,d$. Each voter $i$ chooses each candidate $j$ with known probability $p_{i,j}\in(0,1)$, independently of all other voters. A candidate wins the election if they receive a strict majority of the votes which may be, depending on the model, an absolute or a relative majority. After the votes have been collected, it is possible to count the vote of any voter $i$, taking a known time $c_i\geq 0$ (which can also be thought of as cost). The present paper is concerned with the following question: In what order should votes be counted so as to minimize the expected time (or cost) until it is known who won the election? We allow adapting this order along the way. 

The above problem is a stochastic function-evaluation problem (for a survey, see~\cite{unluyurt2004sequential}): The votes can be regarded as random variables taking one of $d$ values, and the function maps these variables to the index of the winning candidate or, if no such candidate exists, to $0$.  When $d=2$,
results on such problems can be applied to our problem.  Note that for $d=2$, there is no difference between absolute and relative majority. If $n$ is odd, the problem for $d=2$ is somewhat simpler because function value $0$ is impossible. The function can then be seen as a so-called $k$-of-$n$ function, a Boolean function that is $1$ if and only if at least $k$ of the $n$ variables have value $1$. For these functions, a beautiful optimal strategy due to Salloum, Breuer, and (independently) Ben-Dov is known~\cite{salloum1984optimum,ben1981optimal}.
In what follows, we refer to this as the SBB strategy.  In our setting, it translates to the following: 
An interchange argument shows that, if it were known a-priori that candidate $j$ is the winner, it would be optimal to inspect votes in increasing order of $c_i/p_{i,j}$ ratios. Since the two strategies, for $j\in\{1,2\}$, have to count $\lfloor n/2\rfloor +1$ votes before they have verified the winner, there exists, by the pigeonhole principle, a vote that is contained in both corresponding prefixes. An unconditionally optimal strategy can safely inspect this vote and then recurse on the resulting subinstance. 
In the case that $d=2$ and $n$ is even, determining who is the winner of the election, or determining that there is no winner because of a tie, is equivalent to determining whether there are precisely $n/2$ votes for both candidates, and an optimal strategy can be derived in a similar way~\cite{acharya2011expected,gkenosis22stochastic}.

For our general setting, where $d$ can be greater than 2, we do not expect a similarly clean strategy.  For previously studied functions of Boolean variables that are 
slight extensions of $k$-of-$n$ functions, 
the stochastic evaluation problem is either known to be NP-hard or no polynomial-time algorithms\footnote{For an algorithm for our problem to run in polynomial time, we require that, in every situation, it can find the next vote to count in time polynomial in the input size. Note that the corresponding decision tree may have exponential size. For an in-depth discussion of such issues, we refer to the survey by \"Unl\"uyurt~\cite[Section 4.1]{unluyurt2004sequential}.} computing optimal strategies are known. A recent trend in stochastic function evaluation has been to design (polynomial-time) \emph{approximation} algorithms instead, i.e., algorithms whose expected cost can be bounded with respect to the expected cost of an optimal strategy, preferably by a constant factor~\cite{ghuge2022non,gkenosis22stochastic,grammel2022,plank2022simple,liu20226approximation,deshpande2016approximation}.

The problems considered in this paper have similarities to the stochastic function-evaluation problem for Boolean linear threshold functions.  
Constant-factor approximation algorithms have been developed for that problem, using two different approaches. 
The first approach reduces the problem to Stochastic Submodular Cover through the construction of a utility function and solves the problem with an algorithm called Adaptive Dual Greedy ~\cite{deshpande2016approximation}.
We show that this approach yields an $O(d)$-approximation for our problems (but it is conceivable that a more nuanced analysis could yield a better approximation factor).
The second approach generates a separate strategy for each of the possible values of the function, by solving multiple instances of a constrained maximization problem, and interweaves the resulting strategies together~\cite{ghuge2022non}.
Applying this approach and current analysis to our problems would involve interweaving at least $d$ separate strategies, leading to (at least) a linear dependence on $d$.

Instead, we develop a two-phase approach, where the first phase reduces the number of relevant candidates to $O(1)$ and the second phase handles the remaining candidates with some of the aforementioned techniques. As we show, this approach yields $O(1)$-approximation algorithms for both the absolute-majority and the relative-majority setting.

A strategy is considered particularly useful in practice if it is non-adaptive (e.g.,~\cite{goemans2006stochastic,grammel2022,ghuge2022non}), i.e., it considers votes in a pre-specified order until the function is evaluated. For the absolute-majority version, we give an $O(1)$-approximation algorithm with only two rounds of adaptivity. In counting rounds, we use the ``permutation'' definition:
In each round, votes are inspected in an order that is pre-specified for that round, until some stopping condition is reached. 
This definition is consistent with usage in a number of previous papers (e.g.,~\cite{agarwal2019stochastic,ghuge2021power}).

For future work, we leave it open whether there are non-adaptive $O(1)$-approximate strategies and, if so, whether they can be computed in polynomial time.

\subsection{Our Contribution}

We start with discussing the arguably simpler absolute-majority case. Here, a candidate wins if they receive more than $n/2$ votes. Let us first formalize the termination condition, i.e., the condition that we have enough information to determine the outcome of the election. The condition is fulfilled if it is either known that (i) some particular candidate has won or that (ii) it is no longer possible for any candidate to win. Clearly, (i) is fulfilled if, for some candidate $j$, the number of votes candidate $j$ has received is at least $\lfloor n/2 \rfloor + 1$ and (ii) is fulfilled if, for all candidates $j$, the number of votes received by candidates other than $j$ has reached at least $\lceil n/2\rceil$.

Such conditions can be turned into submodular functions which map any state $b$ (representing the values of the votes counted so far) to a non-negative integer value such that reaching a state whose value is equal to a certain goal value is equivalent to satisfying the condition. 
For instance, as noted before, we can conclude that
candidate $j$ is not the absolute-majority winner if and only if the number of votes for candidates other than $j$ has reached at least $\lceil n/2\rceil$. 
Denote the number of votes in state $b$ for candidates other than $j$, capped at $\lceil n/2\rceil$, by $g_j(b)$.
By previous techniques (cf.~\cite{deshpande2016approximation}), such conditions can be combined to obtain a submodular function $g$ with the property that, once $g$ has reached a certain value, the termination condition is reached. As mentioned before, Adaptive Dual Greedy can then be applied to the resulting submodular function, and the approximation bound on Adaptivity Dual Greedy can then be shown to yield an approximation factor of $O(d)$. For completeness, we detail this analysis in Appendix~\ref{appendix:ADG}.

A second approach is based on the previously mentioned approach for $d=2$. Note that, using the SBB strategy, it is possible to evaluate at minimum expected cost whether any fixed candidate $j$ wins (by what can be viewed as merging the other candidates into a single one). Clearly, the expected cost of this strategy does not exceed the expected cost of an optimal strategy, as an optimal strategy also has to evaluate whether candidate $j$ wins (and more than that). Since concatenating the corresponding strategies (where repeated votes are skipped) determines who is the winner, the resulting strategy has an approximation factor of at most $d$. Unfortunately, such an approximation factor seems inherent to an approach which consists of separate strategies for all $d$ candidates, even if a more sophisticated round-robin approach (e.g.,~\cite{allen2017evaluation,plank2022simple}) is used to merge them into a single strategy.

As a third approach, let us try something much more naive, namely inspecting votes in increasing order of cost. Not surprisingly, this alone does not yield a constant-factor approximation algorithm, as shown, e.g., by the following instance: Consider $d=2$ with $n$ odd, and the following three types of votes:  (i) $(n-1)/2$ votes $i$ with $p_{i,1}=1-\varepsilon$ and $c_i=\varepsilon$, (ii) $(n-1)/2$ votes $i$ with $p_{i,1}=\varepsilon$ and $c_i=1-\varepsilon$, and (iii) a special vote $i^\star$ with $p_{i^\star,1}=1-\varepsilon$ and $c_{i^\star}=1$. 
When one considers votes in increasing order of cost, 
for $\varepsilon\to 0$, with probability approaching 1 it will be necessary to inspect all $n$ votes (at a total cost approaching $(n+1)/2$).  However,   
one could instead first inspect the type (i) votes, and then the special vote, and with probability approaching 1 this would be sufficient (at a total cost approaching $1$).

Interestingly, a combination of the two latter approaches yields a constant-factor approximation algorithm. The crucial observation is that inspecting votes in increasing order of cost \emph{as long as more than two candidates can still win} does not incur more cost than a constant factor times the cost that an optimal strategy incurs. To get an intuition for why this is true, consider the case in which all costs are either $\varepsilon$ (for negligible $\varepsilon$) or $1$. After all cost-$\varepsilon$ votes have been inspected, resulting in state $b$, let $m_2$ be the second-smallest distance of a function $g_j(b)$  from its goal value $\lceil n/2 \rceil$. It is not difficult to see that an optimum strategy still has to inspect at least $m_2$ votes. On the other hand, assume that, after inspecting $2 m_2$ votes, there are still more than two functions $g_j(b)$ that have not reached their goal values yet. Noting that, for any pair of such goal functions, inspecting any vote adds $1$ to at least one of them readily yields a contradiction. Interestingly, using a more involved charging argument, we are able to extend such an argument to arbitrary cost.

Next, consider the resulting situation, in which there are only two candidates left that can still win, say, candidates $1$ and $2$. We can now use the SBB strategy to first evaluate whether candidate $1$ wins and then, again, to evaluate whether candidate $2$ wins. Similarly to the argument given before, it is easy to argue that the expected cost of this is no larger than twice the expected cost of an optimal strategy. The resulting approximation guarantee of the entire algorithm is therefore $4$. 
By replacing each instance of the SBB strategy with a $2$-approximation non-adaptive strategy during the process of evaluating if candidate $1$ wins and then if candidate $2$ wins, we obtain an algorithm with only three rounds of adaptivity at the cost of a larger approximation factor of $6$. Additionally, we give another algorithm that further reduces the rounds of adaptivity to $2$ with a slightly subtle analysis, but it increases the approximation factor to $10$.

We now discuss the relative-majority case. Here, a candidate wins if they receive more votes than any other candidate. Again, the termination condition is fulfilled if it is either known that (i) some particular candidate has won or that
(ii) it is no longer possible for any candidate to win. To formalize this, consider any state $b$ and any two candidates $j$ and $k$. Note that $j$ is known to have received more votes than candidate $k$ if the number of inspected votes for candidate $k$ plus the number of uninspected votes is not larger than the number of inspected votes for candidate $j$. Again, we can translate this to a goal-value function $g_{j,k}(b)$. Then, (i) is fulfilled if, for some candidate $j$, the $d-1$ functions $g_{j,k}(b)$ (for $k\in\{1,\dots,d\}\setminus\{j\}$) have reached their goal values.  For (ii), all votes have to be inspected. 

Interestingly, a similar approach works: Again, we inspect votes in increasing order of costs until only two candidates can win. We can bound the total cost of this phase by $4$ times the cost of an optimal strategy. In the second phase, assume candidates $1$ and $2$ can still win. We would like to evaluate whether $g_{1,2}(b)$ reaches its goal value. To do so, we use Adaptive Dual Greedy. We adapt the analysis of Deshpande et al.\ for Boolean linear threshold functions~\cite{deshpande2016approximation},
to handle the fact each vote may contribute $0$, $1$, or $2$ to $g_{1,2}(b)$. As a consequence, we show that the cost is at most $3$ times the expected cost of an optimal strategy. After this algorithm has been applied, it is either known that candidate $1$ is the winner, or candidate $2$ is the only candidate that can still possibly win. We use the SBB strategy to handle the latter case. The resulting total approximation factor is $8$.

\subsection{Further Related Work}

The stochastic function-evaluation problem we discussed is inspired by the seminal work of Charikar et al.~\cite{charikar2000query}, where each input value is originally unknown but can be revealed by paying an \emph{information price} $c_i$, and the goal is to design a query strategy to compute the value of function $f$ with minimum cost. However, in the work of Charikar et al., the distributions of input variables are unknown so the performance is measured using competitive analysis. In our \emph{stochastic} setting, we can compute the expected cost of query strategies since the distributions are known, which allows us to use the standard analysis in approximation algorithms.
Recently, Blanc et al.~\cite{blanc2021query} revisited the priced information setting of Charikar et al., and they focus on a similar model to the Stochastic Boolean Function Evaluation problem of Deshpande et al.~\cite{deshpande2016approximation}, but allow an $\epsilon$ error in the evaluation result. 

Most recent works related to stochastic discrete minimization problems concentrate on designing non-adaptive strategies~\cite{ghuge2022non,gkenosis2018stochastic,grammel2022} or strategies that require small rounds of adaptivity~\cite{agarwal2019stochastic,ghuge2021power}. Non-adaptive strategies have advantages in many practical applications. For example, they typically require less storage space and they are amenable to parallelization. In addition, the \emph{adaptivity gap} is studied to measure the performance difference between an optimal adaptive algorithm and an optimal non-adaptive algorithm. So if the adaptivity gap is small, non-adaptive strategies are usually more desired. However, there are situations where it is hard to design a non-adaptive strategy with good performance, e.g., when the adaptivity gap is large. Surprisingly, sometimes allowing a few rounds of adaptivity can greatly reduce the expected cost of an optimal non-adaptive strategy. Hence, it is interesting to design strategies that permit small rounds of adaptivity to keep a subtle balance between performance, storage space, and parallelizability.

Interestingly, economists study similar problems in deciding the winner of an election while the goal is different~\cite{bognar2015optimal,gershkov2009optimal}. For example, in one of the settings of finding a first-best mechanism, they seek to maximize the \emph{social welfare}, which is the expected utility gain of individuals if the value of $f$ matches the input of any coordinate, minus the expected cost spent in learning the input values that are needed to determine the value of $f$. In other words, this approach generates a complicated rule for determining the value of election results. Our approach can be considered as fixing the decision rule so the utility gain is constant, and now the goal is equivalent to minimizing the cost of learning the value of $f$. Hence, our algorithms can be potentially helpful in mechanism-design problems.

\subsection{Overview}

In Section~\ref{sec:prelim}, we give some needed definitions. In Sections~\ref{sec:absolute} and~\ref{sec:relative}, we give constant-factor approximation algorithms for the absolute-majority and the relative-majority cases, respectively. We conclude and state open problems in Section~\ref{sec:conclusion}.

\section{Preliminaries}\label{sec:prelim}
Let $n$ be the number of voters, and $d$ be the number of candidates.  We number the voters from $1$ to $n$, and the candidates from $1$ to $d$. We sometimes use $[k]$ to denote the set $\{1,\dots,k\}$.

For $X=(X_1,\ldots,X_n) \in \{1,\ldots,d\}^n$
we define $N_j(X)$ to be the number of entries $X_i$ of $X$ that are equal to $j$. 
The {\em absolute-majority function}
$f:\{1,\ldots,d\}^n \rightarrow \{0,1,\ldots,d\}$ is defined as follows:
\begin{align*}
    f(X) = \begin{cases}
        j & \text{ if } N_j(X)\geq \lfloor\frac{n}{2}\rfloor +1,\\
        0 & \text{ otherwise. }
    \end{cases}
\end{align*}
Here $f(X)=j$ means $j$ is an absolute-majority winner, meaning $j$ receives more than half the votes, and $f(X)=0$ means there is no absolute-majority winner.
The {\em relative-majority function} $f: \{1,\dots,d\}^n\rightarrow \{0,\dots,d\}$ is as follows:
\begin{align*}
    f(X)=\begin{cases}
        j & \text{ if } \forall k\in\{1,\dots,d\}\setminus\{j\},\ N_j(X)>N_k(X),\\
        0 & \text{ otherwise.}
    \end{cases}
\end{align*}
Similar to the previous function, the output of $f$ is either the winner of the election, or $0$ if there is no winner.  Here the winner must receive more votes than any other candidate.

The above two functions are symmetric functions, meaning that the results of the function only depend on the \emph{number} of $1$'s, \dots, $d$'s in the input. (This property is called anonymity in social-choice theory.)
We assume that the vote of voter $i$ is an independent random variable $X_i$, taking a value in the set $\{1,\ldots,d\}$.  For each candidate $j$, we use
$p_{i,j}$ to denote the probability that $X_i=j$.
We assume that each $p_{i,j} > 0$.

In our vote-inspection problems, 
we are given as input the $p_{i,j}$ values for each voter $i$ and candidate $j$.  
For each voter $i$, we are also given a value $c_i \geq 0$.  The only way to discover the value of $X_i$, the vote of voter $i$, is to ``inspect'' vote $i$, which incurs cost $c_i$.
(We will sometimes use the phrase ``test variable $X_i$'' instead of ``inspect vote $i$''.) 
 In other words, $c_i$ is the cost of inspecting the vote of voter $i$.
The problem is to determine the optimal (adaptive) order in which to sequentially inspect the votes,
so as to minimize the expected total cost of determining the winner of the election.  

An assignment $a$ to the $n$ variables $X_i$ is a vector $a\in \{1,\ldots,d\}^n$, where $a_i$ is the value assigned to $x_i$.  We use assignments to represent the values of the votes of the $n$ voters.
A partial assignment is a vector $b\in \{1,\ldots,d,*\}^n$, where $b_i=*$ indicates that the value of $X_i$ has not been determined.  In discussing vote inspection strategies, we use a partial assignment $b$ to represent the current state of knowledge regarding the values of the votes inspected so far, with $b_i$ equal to the value of voter $i$'s vote if it has been inspected already, and $b_i=*$ otherwise.  
In the same way as for $X$, we use $N_j(b)$ to denote the number of entries $i$ of $b$ such that $b_i=j$.  So $N_j(b)$ represents the number of inspected votes that are for candidate $j$.  

\section{Absolute Majority}
\label{sec:absolute}

We describe an adaptive 4-approximation algorithm for the absolute-majority version of our problem. 
The general idea is as follows.
Throughout, we keep a partial assignment $b$ that records the vote values known so far.
For each candidate $j$, let $g_j(b)$ denote the number of known votes for candidates other than $j$, capped at $\lceil n/2 \rceil$.
We can rule out $j$ as the absolute-majority winner of the election iff $g_j(b) = \lceil n/2 \rceil$.  
The algorithm has two phases.
In the first phase, we inspect the votes $i$ in increasing order of their costs $c_i$, until there are at most two candidates $j$
who could still get enough votes to win the election. Equivalently, Phase 1 ends when the third-smallest value of $g_j(b)$, for all $j$, is $\lceil n/2 \rceil$.
At this point, if we have enough information to determine the outcome of the election, we are done.
Otherwise, for each of the at most two remaining potential winners $j$, we need to determine whether the remaining uninspected votes include enough additional votes for candidate $j$ 
to make $j$ the absolute-majority winner.
(If neither do, then the election has no absolute-majority winner.)
We take one of these remaining potential winners, 
calculate the minimum number $k$ of votes they would need to win,
and use the SBB strategy for evaluating $k$-of-$n$ functions to determine whether this candidate is the winner.
If not, and if another candidate is still a potential winner, we again use the SBB strategy to check if that candidate is the winner.  If not, we know there is no winner.
Pseudocode for the algorithm is presented in Algorithm~\ref{alg:abs-maj-1} 
(note that we are more verbose than necessary regarding the output, for readability).

\begin{algorithm}[!t]
\caption{A $4$-apx.\ adaptive algorithm for evaluating the absolute-majority function
}\label{alg:abs-maj-1}
$b \gets \{*\}^n$\;
\While{\textup{the value of $f$ is not certain}}{
    \eIf{\textup{the $3$rd-smallest (ties broken arbitrarily)  $g_j(b)<\lceil n/2 \rceil$}}{
    test the next untested variable in increasing order of $c_i$\;
    update $b$\;
    }
    {
    $\alpha\gets \text{any}\ j\in\argmin_{j\in\{1,\dots,d\}}g_j(b)$\;
    $\beta\gets \text{any}\ j\in\argmin_{j\in\{1,\dots,d\}\setminus\{\alpha\}}g_j(b)$\;
    evaluate if $f=\alpha$ using the SBB strategy\;
    update $b$\;
    \eIf{$f= \alpha$}
    {\Return $f=\alpha$}
    {
        \eIf{$g_\beta(b)=\lceil n/2 \rceil$}
        {\Return $f=0$}
        {evaluate if $f=\beta$ using the SBB strategy\;
        \eIf{$f= \beta$}
        {\Return $f=\beta$}
        {\Return $f=0$}
        }
    }
    }
}\vspace{0.5mm}
\Return the value of $f$
\end{algorithm}

We now prove an approximation bound on the algorithm.

\begin{theorem}\label{thm:abs-maj-1}
    Algorithm~\ref{alg:abs-maj-1} is an adaptive $4$-approximation algorithm for evaluating the absolute-majority function.
\end{theorem}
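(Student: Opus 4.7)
The plan is to bound $\E[\ALG]$ by separately bounding the expected cost of Phase~1 (the cost-ordered loop in lines 3--5) and of Phase~2 (the two SBB calls) by $2\E[\OPT]$ each, giving the factor $4$.

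For Phase~2, at the end of Phase~1 the state is some partial assignment $b$ with at most two candidates $\alpha,\beta$ still possible winners. Conditioned on $b$, the question ``does $\alpha$ win?'' reduces to evaluating a $k$-of-$n$ function on the uninspected votes, with $k=\lfloor n/2\rfloor+1-N_\alpha(b)$ and the $i$-th uninspected variable taking value~$1$ with probability $p_{i,\alpha}$; SBB is provably optimal for this. Hence the expected cost of the first SBB call from $b$ is at most the expected remaining cost, from $b$, of any strategy that answers the same question; in particular it is at most that of any strategy that evaluates $f$ from $b$. The same holds for the (conditionally invoked) call for $\beta$. Since OPT must, on every realization, answer both ``does $\alpha$ win?'' and ``does $\beta$ win?'', combining the two conditional bounds yields $\E[\text{Phase 2}]\leq 2\E[\OPT]$, in analogy with the simple $d$-approximation sketch from the overview.

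For Phase~1, I would first treat the unit-cost case. Whenever at least three functions $g_j(b)$ are strictly below their goal $\lceil n/2\rceil$, every inspection increases $g_j(b)$ by $1$ for at least two of these three candidates (the inspected vote is for some candidate $k$, and $k$ differs from at least two of them); hence the sum of the two smallest deficits $\lceil n/2\rceil-g_j(b)$ drops by at least $1$ per inspection. Since initially all deficits equal $\lceil n/2\rceil$, Phase~1 terminates after at most $2\lceil n/2\rceil$ inspections. On the other hand, any strategy evaluating $f$ must drive at least $d-1$ of the $g_j$'s to $\lceil n/2\rceil$, which a counting argument on inspected votes and the candidate each one names shows requires at least $\lceil n/2\rceil$ inspections. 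So for unit costs, the Phase~1 cost is at most $2\cdot\OPT$ realization-by-realization.

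The hardest part will be extending the Phase~1 bound to arbitrary costs, where the pointwise inequality above can fail. The plan is to use a charging argument in expectation: partition the cost axis into dyadic buckets (or integrate over cost thresholds) and, for each bucket, use that the cost-ordered processing of Phase~1 makes its inspections of votes in that bucket a contiguous initial segment of cost-sorted votes; each such inspection, performed while three candidates are still in contention, can then be charged to an inspection that OPT must also make of comparable cost, since OPT too must eventually reduce the set of possible winners below three. Carrying out this scheme with overall factor at most $2$ gives $\E[\text{Phase 1}]\leq 2\E[\OPT]$, and combined with the Phase~2 bound this yields the claimed $4$-approximation.
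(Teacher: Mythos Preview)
Your Phase~2 argument and the unit-cost Phase~1 argument are essentially what the paper does. The genuine gap is in the Phase~1 analysis for arbitrary costs. You assert that the realization-by-realization inequality ``Phase~1 cost $\le 2\cdot\OPT$'' can fail and therefore retreat to a vague expectation-level scheme (dyadic buckets, integrating over thresholds). This premise is wrong: the paper proves the inequality \emph{pointwise}, and the mechanism for doing so is the missing idea in your plan.

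Concretely, the paper establishes two local facts that hold at \emph{every} intermediate state $b'$ reached during Phase~1 (not just the initial state): (Fact~1) $\OPT$ must test at least $m_2(b')$ variables outside the current prefix $S'$, where $m_j(b')=\lceil n/2\rceil-g_j(b')$ is the $j$th-largest remaining deficit; and (Fact~2) Phase~1 terminates within $m_2(b')+m_3(b')\le 2m_2(b')$ further tests. Applying these with $S'=\{x_{\sigma_1},\dots,x_{\sigma_{k-\ell}}\}$ for each $\ell$ (where $\sigma$ is the cost order and $k$ is the Phase~1 length on this realization) gives that $\OPT$ tests at least $\lceil\ell/2\rceil$ variables of cost $\ge c_{\sigma_{k-\ell+1}}$. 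Hence the $\lceil\ell/2\rceil$-th most expensive test of $\OPT$ pays for the pair $c_{\sigma_{k-\ell}}+c_{\sigma_{k-\ell+1}}$ of Phase~1 tests, and summing over odd $\ell$ yields $\sum_{i\le k}c_{\sigma_i}\le 2\cdot C(\OPT,X)$ on every realization $X$. Your plan only applies the analogous bounds once, at the start, which is exactly why you cannot get the cost comparison; the repeated application across all prefixes is what converts the combinatorial count into a cost inequality. Without this, your proposed bucketing sketch is not an argument---you have not said what is charged to what, nor why the factor comes out to~$2$.
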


\begin{proof}
    We analyze the two phases of Algorithm~\ref{alg:abs-maj-1}: Phase 1 ends when the third-smallest $g_j(b)$ equals $\lceil n/2 \rceil$; note that Algorithm~\ref{alg:abs-maj-1} cannot terminate before that happens. Phase 2 ends when the algorithm terminates.
    The key to our proof is to show that, in Phase 1, Algorithm~\ref{alg:abs-maj-1} spends at most $2$ times the cost of an optimal strategy.  

    We first consider the following situation.
    Suppose we are in the process of executing Algorithm~\ref{alg:abs-maj-1} and Phase 1 has not yet ended. Let $b'$ be the current value of partial assignment $b$, and without loss of generality, assume $N_1(b')\geq \cdots\geq N_d(b')$. Let $m_1(b),\dots,m_d(b)$ denote the distances of functions $g_1(b),\dots,g_d(b)$ to $\lceil n/2 \rceil$, i.e.,
    \begin{equation*}
        \forall j\in\{1,\dots,d\},\ \ m_j(b)=\lceil n/2 \rceil-\min\Bigg\{\sum_{\ell\in\{1,\dots,d\}\setminus\{j\}}N_\ell(b), \lceil n/2 \rceil\Bigg\}.
    \end{equation*}
    Since $N_1(b')\geq \cdots\geq N_d(b')$, we have
    $m_1(b')\geq \cdots\geq m_d(b')$. Since Phase 1 has not yet ended, $m_3(b')>0$.
    Let $S'$ denote the set of variables tested so far by Algorithm~\ref{alg:abs-maj-1}.

    We will show that the following two facts hold:
    \begin{itemize}
    \item Fact 1: During its execution, an optimal algorithm needs to test at least $m_2(b')$ of the variables not in $S'$.
    \item Fact 2: To finish Phase 1, Algorithm~\ref{alg:abs-maj-1} needs to test at most $m_2(b')+m_3(b')$ additional variables.
    \end{itemize}

    To prove Fact 1, note that if $f=j\neq 0$, then during its execution, an optimal algorithm needs to test at least $\lfloor n/2\rfloor +1-N_j(b')$ of the variables not in $S'$; if $f=0$, an optimal algorithm needs to test at least $m_1(b')$ variables not in $S'$.  
    Since
    \begin{equation*}
        m_2(b')=\lceil n/2 \rceil - N_1(b')-\sum_{j=3}^d N_j(b') \leq \lfloor n/2\rfloor +1-N_1(b')\leq \lfloor n/2\rfloor +1-N_j(b')
    \end{equation*}
    and $m_2(b')\leq m_1(b')$, we have shown that Fact 1 holds, regardless of the value of $f$.
    
    We now show Fact 2. 
    Suppose that after performing $m_2(b')+m_3(b')$ additional tests, Algorithm~\ref{alg:abs-maj-1} still hasn't terminated.
    Let $b''$ denote the partial assignment after the $m_2(b')+m_3(b')$ additional tests are performed.
    If $m_2(b'')\neq 0$, the number of $2$'s among those $m_2(b')+m_3(b')$ variables must be more than $m_3(b')$; if $m_3(b'')\neq 0$, the number of $3$'s among those variables must be more than $m_2(b')$. Those two conditions cannot occur simultaneously, and if one of $m_2(b''),m_3(b'')$ is not equal to $0$, the remaining $m_j(b'')$ for $j>3$ are definitely $0$.
    If both $m_2(b'')=m_3(b'')=0$, we can use this argument again for $m_4(b'')$ and $m_5(b'')$. If still not finished, we can use it for $m_6(b'')$ and $m_7(b'')$, etc. Hence, there are at most two $m_j(b'')$ that are not equal to 0 after testing the additional $m_2(b')+m_3(b')$ variables because only $m_1(b'')$ and one other $m_j(b'')$ could be greater than 0.  Thus Phase 1 has ended, proving Fact 2.   
    
    Having proved the above two facts, which relate to the status of Algorithm~\ref{alg:abs-maj-1} when it has not yet completed Phase 1,  
    we now analyze the total cost spent in Phase 1, relative to the total cost spent in an optimal algorithm.
    Let $\sigma$ denote the indices of variables such that
    \begin{equation*}
        c_{\sigma_1}\leq c_{\sigma_2}\leq\cdots\leq c_{\sigma_n}.
    \end{equation*}
    Let $k$ be the number of variables tested by Algorithm~\ref{alg:abs-maj-1} in Phase 1.  Thus the variables tested in Phase 1 are  $\{x_{\sigma_1},\dots,x_{\sigma_k}\}$. 
    
    Now let $b'$ be the partial assignment right after Algorithm~\ref{alg:abs-maj-1} tests $x_{\sigma_1},\dots,x_{\sigma_{k-1}}$. We define $\phi_1,\dots,\phi_d$ such that
    \begin{equation*}
        m_{\phi_1}(b')\geq m_{\phi_2}(b')\geq \cdots\geq m_{\phi_d}(b').
    \end{equation*}
    Since testing the next variable $x_{\sigma_k}$ will end Phase 1, $m_{\phi_3}(b')=1$, and because testing $\{x_{\sigma_1},...,x_{\sigma_{k-1}}\}$ does not provide enough information to determine the value of $f$, an optimal algorithm needs to test at least one variable from $\{x_{\sigma_k},\dots,x_{\sigma_n}\}$ during its execution. Hence, we can compare the most expensive variable tested by an optimal algorithm, which costs at least $c_{\sigma_k}$, with the last two variables tested by Algorithm~\ref{alg:abs-maj-1} in Phase 1, which cost $c_{\sigma_{k-1}}+c_{\sigma_k}$.

    Now, suppose $b'$ is the partial assignment right after testing $x_{\sigma_1},\dots,x_{\sigma_{k-3}}$, and consider the corresponding $\phi_1,\dots,\phi_d$.
    Because there are exactly $3$ variables tested in Phase 1 after $\{x_{\sigma_1},\dots,x_{\sigma_{k-3}}\}$, and by Fact 2,
    $m_{\phi_2}(b') +m_{\phi_3}(b')$ is an upper bound on the number of additional variables tested in Phase 1, we have that
    $3\leq m_{\phi_2}(b')+m_{\phi_3}(b')$. Since the $m_j$'s are integer-valued functions, we have $m_{\phi_2}(b')\geq 2$.  By Fact 1, an optimal algorithm needs to test at least $m_{\phi_2}(b')$ variables not in $\{x_{\sigma_{1}},\dots,x_{\sigma_{k-3}}\}$ during its execution, so it needs to test at least two variables from $\{x_{\sigma_{k-2}},\dots,x_{\sigma_n}\}$. Hence, we can compare the second-most expensive variable tested by an optimal algorithm, which costs at least $c_{\sigma_{k-2}}$, with the third-last and the fourth-last variables tested by Algorithm~\ref{alg:abs-maj-1} in Phase 1, which cost $c_{\sigma_{k-3}}+c_{\sigma_{k-2}}$.
    
    More generally, let $b'$ denote the partial assignment right after having tested variables $x_{\sigma_1},\dots,x_{\sigma_{k-\ell}}$ where $1\leq\ell\leq k-1$, and again consider the corresponding $\phi_1,\dots,\phi_d$. 
    Algorithm~\ref{alg:abs-maj-1} does $\ell$ tests in Phase $1$ after testing $x_{\sigma_1},\dots,x_{\sigma_{k-\ell}}$, 
    where $\ell \leq m_{\phi_2}(b') + m_{\phi_3}(b')\leq 2m_{\phi_2}(b')$ by Fact $1$. 
    Applying Fact 2 then yields that an optimal algorithm needs to test at least $m_{\phi_2}(b') \geq \lceil\ell/2\rceil $ variables from $\{x_{\sigma_{k-\ell+1}},\dots,x_{\sigma_n}\}$.
    Hence, the $\lceil\ell/2\rceil$th-most expensive variable tested by an optimal algorithm costs at least $c_{\sigma_{k-\ell+1}}$. On the other hand, the $\ell$th-last and the $(\ell+1)$st-last variable tested by Algorithm~\ref{alg:abs-maj-1} in Phase 1 cost $c_{\sigma_{k-\ell+1}}+c_{\sigma_{k-\ell}}$ in total.
    
    To conclude our analysis of Phase 1, the cost Algorithm~\ref{alg:abs-maj-1} spends in Phase 1 is upper-bounded by twice the total cost of the $\lceil k/2\rceil$ most expensive variables tested by an optimal algorithm, and hence it is also upper-bounded by twice the total cost of all variables tested by an optimal algorithm.

    After Phase 1, there are three remaining possibilities: $f=\alpha$, $f=\beta$ and $f=0$. Let $\OPT$ denote an optimal adaptive strategy for the problem. If $f=\alpha$, clearly, our algorithm queries the remaining variables optimally; if $f=\beta$, since we always need to prove $f\neq \alpha$, our algorithm spends at most $2E[\OPT]$ in Phase 2; if $f=0$, we need to show $f\neq\alpha$ and $f\neq \beta$, hence, our algorithm spends at most $2E[\OPT]$ in Phase 2 as well. Combining the cost we spent in Phase 1, Algorithm~\ref{alg:abs-maj-1} is a $4$-approximation algorithm for evaluating an absolute-majority function when $c_i> 0$.
\end{proof}

The SBB strategy is a highly adaptive strategy; it is easy to see that Algorithm~\ref{alg:abs-maj-1}
may require $\Omega(n)$ rounds of adaptivity. 
Motivated by this, we also sketch a modification of this algorithm such that the new algorithm needs only 3 rounds of adaptivity, but the approximation ratio is increased slightly from 4 to 6.
Specifically, we 
replace Phase 2 which contains the two runs of the SBB strategy. After Phase 1 is completed, if we are still uncertain about the result of the election, we can first evaluate if $\alpha$ is the absolute-majority winner by using the non-adaptive 2-approximation algorithm for evaluating a $k$-of-$n$ function, which was introduced in Gkenosis et al.~\cite{gkenosis2018stochastic}.
This algorithm performs a modified (cost-sensitive) round robin between two orderings, 
one in increasing order of $c_i/p_{i,\alpha}$, and the other in increasing order of $c_i/(1-p_{i,\alpha})$. The modified round robin is due to Allen et al.~\cite{allen2017evaluation} and we present the pseudocode in Appendix~\ref{appendix:10-approx} (as it is used as part of another algorithm).
It keeps track of the next vote to be inspected in each ordering, and the total cost incurred so far for each ordering. It computes, for each ordering, the sum of the cost incurred so far, plus the cost of the next test.  It performs the next test from the ordering for which this sum is smaller (breaking ties arbitrarily). We call this Phase 2.

Since the problem of evaluating $f$ requires determining whether or not $\alpha$ is the winner, the cost spent in Phase 2 is upper-bounded by twice the expected cost of evaluating the value of $f$. If we know $\alpha$ is not the absolute-majority winner after Phase 2, we start Phase 3 by evaluating if $\beta$ is the winner using the same algorithm in Gkenosis et al. This will again add at most twice the expected cost of evaluating $f$. Clearly, if both $\alpha$ and $\beta$ are not the winner, there is no absolute-majority winner in the election..

We thus have the following theorem.
\begin{theorem}
    There exists a $6$-approximation algorithm for evaluating the absolute-majority function with $3$ rounds of adaptivity.
\end{theorem}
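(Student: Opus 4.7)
The plan is to formalize the three-phase argument sketched just before the theorem statement: verify that the modified algorithm uses only three rounds of adaptivity, and show that each phase incurs at most $2\E[\OPT]$, giving an overall guarantee of $6\E[\OPT]$.

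For the round count, I would argue as follows. Phase~1 inspects votes in the fixed, cost-sorted order $c_{\sigma_1}\leq \cdots \leq c_{\sigma_n}$, halting once at most two candidates remain viable. The permutation is predetermined; only the stopping point depends on the observed values, which under the ``permutation'' definition of rounds is one round. Phase~2 invokes the non-adaptive $2$-approximation of Gkenosis et al.~\cite{gkenosis2018stochastic} once, to evaluate whether $\alpha$ is the absolute-majority winner, and Phase~3 analogously invokes it once for $\beta$ if still needed. Each invocation is itself non-adaptive, so the total adaptivity is three rounds.

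For the cost bound, Phase~1 is identical to Phase~1 of Algorithm~\ref{alg:abs-maj-1}, so the analysis inside the proof of Theorem~\ref{thm:abs-maj-1} gives an expected cost of at most $2\E[\OPT]$ verbatim. For Phase~2, the partial assignment $b$ produced at the end of Phase~1 leaves at most two viable candidates $\alpha,\beta$. Given $b$, determining whether $\alpha$ is the winner on the remaining uninspected votes is exactly a $k$-of-$n$ evaluation problem, where a remaining vote is treated as $1$ iff it is for $\alpha$ and $k=\lfloor n/2\rfloor+1-N_\alpha(b)$. Since any correct strategy must in particular determine whether $f=\alpha$, the optimal expected cost of this residual $k$-of-$n$ instance (conditioned on $b$) is at most the expected residual cost of $\OPT$ conditioned on $b$; taking expectations over $b$ and applying the non-adaptive $2$-approximation of Gkenosis et al.\ yields a Phase~2 cost of at most $2\E[\OPT]$. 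Phase~3, when reached, is identical with $\beta$ in place of $\alpha$ and contributes another $2\E[\OPT]$. Summing over the phases gives $6\E[\OPT]$.

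The only step that requires care, rather than being a genuine obstacle, is the lower-bound comparison in Phases~2 and~3: one must justify that optimally evaluating whether $\alpha$ (respectively $\beta$) wins on the residual instance is a valid lower bound on $\E[\OPT]$. This follows because $\OPT$ must distinguish the three possible outcomes $f=\alpha$, $f=\beta$, and $f=0$, and in particular must separate $f=\alpha$ from $f\neq\alpha$, so any conditional lower bound for the $k$-of-$n$ subproblem is also a conditional lower bound for $\OPT$. Once this is noted, the remainder of the proof is a direct reuse of Theorem~\ref{thm:abs-maj-1}'s Phase~1 bound and the approximation guarantee from~\cite{gkenosis2018stochastic}.
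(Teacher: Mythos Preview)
Your proposal is correct and follows essentially the same three-phase argument the paper sketches just before the theorem: reuse the Phase~1 bound from Theorem~\ref{thm:abs-maj-1}, then apply the non-adaptive $2$-approximation of Gkenosis et al.\ once for $\alpha$ and once for $\beta$. Your explicit conditioning on the Phase~1 state $b$ when comparing the residual $k$-of-$n$ optimum to $\OPT$ is in fact a bit more careful than the paper's own justification, which simply asserts that evaluating $f$ entails determining whether $\alpha$ wins.
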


We note that it is possible to reduce the rounds of adaptivity from 3 to 2 using a modified round-robin applied to 4 different non-adaptive orderings.  We give details in Appendix~\ref{appendix:10-approx}.

\section{Relative Majority}
\label{sec:relative}

We now turn our attention to the relative-majority version of our problem and present an $8$-approximation algorithm. Similar to our previous approach, we keep a partial assignment $b$ to track the values of inspected votes and also to indicate which vote values are not yet known.  
To show that a candidate $j$ has more votes than a candidate $k$ is equivalent to showing that
\begin{equation*}
    N_k(b)+N_*(b)<N_j(b).
\end{equation*}
The fact that $N_*(b)=n-\sum_{k\in\{1,\dots,d\}}N_k(b)$ motivates designing \begin{equation*}
    g_{j,k}(b)=\min\Bigg\{N_j(b)+\sum_{\ell\in\{1,\dots,d\}\setminus\{k\}}N_\ell(b),n+1\Bigg\}
\end{equation*}
where $g_{j,k}(b)=n+1$ iff candidate $j$ is guaranteed to have more votes than candidate $k$. Hence, to conclude that $j$ is a relative-majority winner, we must have $g_{j,k}(b)=n+1$ for all $k\in\{1,\dots,d\}\setminus\{j\}$. 
Our algorithm again has two phases. In Phase 1, we still inspect the votes in increasing order of their costs. We end Phase 1 when there are at most two candidates who can possibly win and let $\alpha,\beta$ denote the two candidates. 
Then, there are at most three possible results of the election: $\alpha$ wins, $\beta$ wins, or it is a tie. We will first determine if $\alpha$ wins or not. Recall that, if $\alpha$ wins, they must have more votes than $\beta$, which means $g_{\alpha,\beta}(b)$ needs to reach $n+1$ at some point. Since a vote for $\alpha$,$\beta$, or any candidate in $\{1,\dots,d\}\setminus\{\alpha,\beta\}$ contributes $2$, $0$, or $1$ to $g_{\alpha,\beta}$, respectively (until threshold $n+1$ is reached), we can use the Adaptive Dual Greedy algorithm introduced by Deshpande et al.~\cite{deshpande2016approximation} to evaluate whether 
$g_{\alpha,\beta}(b)$ reaches its threshold value $n+1$. If yes, clearly $\alpha$ is the winner.  If not, then we test the remaining variables in increasing order of $c_i/(1-p_{i,\alpha})$ until it is certain whether $\beta$ is the winner or there is a tie.
As shown below, if all remaining votes are for $\alpha$, then $\alpha$ and $\beta$ are tied, otherwise $\beta$ is the winner.
The pseudocode is presented in Algorithm~\ref{alg:RMF-S} (again, this is slightly more verbose than necessary for readability).

\begin{algorithm}[t]
\caption{An 8-apx.\ adaptive algorithm for evaluating the relative-majority function}\label{alg:RMF-S}

$b \gets \{*\}^n$\;
\While{\textup{the value of $f$ is not certain}}
{
    \eIf{\textup{there are more than $2$ candidates that can possibly win}}
    {
        test the next untested variable in increasing order of $c_i$\;
        update $b$\;
    }
    {
        $\alpha\gets \text{any}\ j\in\argmax_{j\in\{1,\dots,d\}}N_j(b)$\;
        $\beta\gets \text{any}\ j\in\argmax_{j\in\{1,\dots,d\}\setminus\{\alpha\}}N_j(b)$\;
        evaluate if $f=\alpha$ by using the Adaptive Dual Greedy algorithm (cf.~\cite{deshpande2016approximation})\;
        \eIf{$f=\alpha$}
        {
            \Return $f=\alpha$\;
        }
        {
            evaluate if $f=\beta$ by testing remaining variables in increasing order of $\frac{c_i}{1-p_{i,\alpha}}$\;
            \eIf{$f=\beta$}
            {
                \Return $f=\beta$
            }
            {
                \Return $f=0$
            }
        }
    }
}\vspace{.5mm}
\Return the value of $f$
\end{algorithm}

We now give an analysis of the algorithm.

\begin{lemma}\label{lemma:rmf-phase1}
    The cost that Algorithm~\ref{alg:RMF-S} spends in Phase 1 is at most $4$ times the cost of an optimal algorithm.
\end{lemma}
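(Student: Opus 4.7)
The plan is to mirror the proof of Theorem~\ref{thm:abs-maj-1}, establishing analogs of its Facts~1 and~2 adapted to the relative-majority setting.

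For each partial assignment $b$, define the slack of candidate $j$ as $\delta_j(b) := N_j(b) + N_*(b) - \max_k N_k(b)$, which is nonnegative iff $j$ is still a possible winner in state $b$. Let $\delta_{(1)}(b) \geq \delta_{(2)}(b) \geq \cdots \geq \delta_{(d)}(b)$ denote these sorted values; note $\delta_{(1)}(b) = N_*(b)$, and Phase~1 of Algorithm~\ref{alg:RMF-S} runs exactly while $\delta_{(3)}(b) \geq 0$.

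The Fact~2 analog I plan to establish is: at any Phase-1 state $b'$, Phase~1 ends within $\delta_{(2)}(b') + \delta_{(3)}(b') + 1$ additional tests. To prove it, suppose that after $\ell$ additional tests Phase~1 has not ended, i.e., three candidates $j_1, j_2, j_3$ remain viable at the resulting state $b''$; let $t_i$ denote how many of those $\ell$ tests went to $j_i$. Since $N_*$ decreases by $\ell$ while $\max_k N_k$ only grows, $\delta_{j_i}(b'') \leq \delta_{j_i}(b') + t_i - \ell$, and nonnegativity of $\delta_{j_i}(b'')$ gives $t_i \geq \max(0, \ell - \delta_{j_i}(b'))$. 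Optimizing over the choice of three viable $j_i$'s (one sub-case, in which the top at $b'$ remains viable at $b''$, yields $t_{j_2} + t_{j_3} \geq 2\ell - \delta_{j_2}(b') - \delta_{j_3}(b')$ and thus $\ell \leq \delta_{(2)}(b') + \delta_{(3)}(b')$; the sub-case where the top changes yields a similar bound after accounting for the extra tests needed to overtake the previous top) gives the claim.

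The Fact~1 analog I plan to establish is: in any realization, OPT tests at least $\lceil(\delta_{(2)}(b')+1)/2\rceil$ variables outside $S'$. If $f = j \neq 0$, OPT's final assignment $b_O$ must satisfy $g_{j,k}(b_O) = n+1$ for the relevant competitor $k$ (namely $k = (2)$ when $j$ is the top at $b'$, and $k = (1)$ otherwise). Splitting $g_{j,k}(b_O)$ into the maximum contribution OPT can extract from $S'$-tests (at most $|S'| + a_j - a_k = n - \delta_{(2)}(b')$ in the worst subcase) and the contribution from complement tests (at most $2$ per test), a direct counting argument yields the stated bound. If $f = 0$, then the termination condition forces OPT to test all $n$ votes, so its number of tests outside $S'$ equals $N_*(b') = \delta_{(1)}(b') \geq (\delta_{(2)}(b') + 1)/2$ using $\delta_{(1)} \geq \delta_{(2)}$.

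Combining the two facts gives $\ell \leq \delta_{(2)}(b'_\ell) + \delta_{(3)}(b'_\ell) + 1 \leq 2\delta_{(2)}(b'_\ell) + 1 \leq 4\lceil(\delta_{(2)}(b'_\ell)+1)/2\rceil \leq 4M(b'_\ell)$ at each Phase-1 state $b'_\ell$. The charging step is then identical to that in the proof of Theorem~\ref{thm:abs-maj-1}: for each $\ell$, OPT's $\lceil\ell/4\rceil$-th most expensive test has cost at least $c_{\sigma_{k-\ell+1}}$, so grouping Algorithm~\ref{alg:RMF-S}'s Phase-1 tests into blocks of four consecutive positions from the end and charging each block, of total cost at most $4c_{\sigma_{k-\ell+1}}$, to a distinct OPT test yields a total Phase-1 cost of at most $4\,\E[\OPT]$. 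The main obstacle is the Fact~1 analog: because a single test can change $g_{j,k}$ by up to $2$ (unlike $g_j$ in absolute majority, which changes by at most $1$), the lower bound on OPT's work is halved; the case analysis on $f$ must be handled carefully so that a uniform, state-dependent bound of the required magnitude is obtained.
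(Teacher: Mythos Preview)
Your approach is essentially the paper's, just reparametrized: writing $j^*$ for the current leader, the paper's distance satisfies $m_{j^*,k}=\delta_k+1$ for $k\neq j^*$, so your Fact~1 bound $\lceil(\delta_{(2)}+1)/2\rceil$ is exactly the paper's $\lceil M_{j^*}/2\rceil$, your Fact~2 bound matches the paper's $m_\gamma+m_\delta$ up to an additive constant, and the blocks-of-four charging is identical. Two small fixes: viability should read $\delta_j>0$ (not $\geq 0$, which mishandles ties), and your ``top changes'' subcase in Fact~2 is cleaner if you simply sum all three inequalities $t_i\geq \ell+1-\delta_{j_i}(b')$ to get $2\ell\leq \delta_{(2)}+\delta_{(3)}+\delta_{(4)}-3$, rather than the vague ``overtake'' argument.
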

\begin{proof}
    For each $j\in\{1,\dots,d\}$ and $k\in\{1,\dots,d\}\setminus\{j\}$, we define $m_{j,k}(b)$ to be the distance from $g_{j,k}(b)$ to $n+1$, so we have
    \begin{equation*}
        m_{j,k}(b) = n+1-g_{j,k}(b),
    \end{equation*}
    and define $M_j(b)$ such that 
    \begin{equation*}
        M_j(b)=\max_{k\in\{1,\dots,d\}\setminus\{j\}} m_{j,k}(b),
    \end{equation*}
    which is the largest distance among the $d-1$ utility functions $g_{j,k}$ of candidate $j$ to $n+1$.
    
    Recall the definition of the relative-majority function $f$. We can see that, when $f=0$, which means it is a draw situation, clearly we need to test all variables. When there exists a relative-majority winner, let $b'$ be the partial assignment at any moment before Algorithm~\ref{alg:RMF-S} terminates, and let $j^*$ be an arbitrary element in the set
    $\argmin_{j\in\{1,\dots,d\}} M_j(b')$.
    Then, any optimal algorithm needs to test at least $\lceil M_{j^*}(b')/2\rceil$ variables from the untested variables since any single test can contribute at most $2$ to any $g_{j,k}$. Here we note that $M_{j^*}(b')>0$, otherwise, in $b'$, $j^*$ would have strictly more votes than all other candidates, and it would already be known that $f=j^*$.

    Now we define $\gamma,\delta\in\{1,\dots,d\}\setminus\{j^*\}$ such that
    \begin{equation*}
        \forall k\in\{1,\dots,d\}\setminus\{\gamma,\delta,j^*\}, \ \ M_{j^*}(b')=m_{j^*,\gamma}(b')\geq m_{j^*,\delta}(b')\geq m_{j^*,k}(b'),
    \end{equation*}
    which implies that
    \begin{equation*}
        \forall k\in\{1,\dots,d\}\setminus\{\gamma,\delta,j^*\},\ \ N_\gamma(b')\leq N_\delta(b')\leq N_k(b')
    \end{equation*}
    since $m_{j^*,k}(b')=n+1-g_{j^*,k}(b')$ for any $k\in\{1,\dots,d\}\setminus\{j^*\}$.

    Suppose Phase 1 has not ended, meaning that there are more than two candidates who can still win at this moment. Then, we claim that testing any $m_{j^*,\gamma}(b')+ m_{j^*,\delta}(b')$ additional variables will end Phase 1.

    We now prove this claim. Let $m_\gamma=m_{j^*,\gamma}(b')$, $m_\delta=m_{j^*,\delta}(b')$. Let $b''$ denote the partial assignment after testing any $m_\gamma+m_\delta$ additional variables. 
    Suppose there exists a $\mu\in\{1,\dots,d\}\setminus\{j^*\}$ such that $m_{j^*,\mu}(b'')> 0$. Then
    \begin{align*}
        &m_{j^*,\mu}(b'')> 0\\
        \Leftrightarrow\ \ &m_{j^*,\mu}(b') - (N_{j^*}(b'')-N_{j^*}(b'))-m_\gamma-m_\delta+(N_\mu(b'')-N_\mu(b')) >0\\
        \Leftrightarrow\ \ &N_\mu(b'')-N_\mu(b')>m_\gamma+m_\delta+N_{j^*}(b'')-N_{j^*}(b')-m_{j^*,\mu}(b'')\\
        \Leftrightarrow\ \ &N_\mu(b'')-N_\mu(b')>(n+1-g_{j^*,\gamma}(b'))+(n+1-g_{j^*,\delta}(b'))\\
        &\hspace{6.2cm}+N_{j^*}(b'')-N_{j^*}(b')-(n+1-g_{j^*,\mu}(b'))\\
        \Leftrightarrow\ \ &N_\mu(b'')-N_\mu(b')>(n+1)-g_{j^*,\gamma}(b')-g_{j^*,\delta}(b')+g_{j^*,\mu}(b')+N_{j^*}(b'')-N_{j^*}(b').
    \end{align*}
    Similarly, if there exists a $\nu\in\{1,\dots,d\}\setminus\{j^*,\mu\}$ such that $m_{j^*,\nu}(b'')>0$, we have
    \begin{equation*}
        N_\nu(b'')-N_\nu(b')>(n+1)-g_{j^*,\gamma}(b')-g_{j^*,\delta}(b')+g_{j^*,\nu}(b')+N_{j^*}(b'')-N_{j^*}(b').
    \end{equation*}
    Then, we can see that if both $m_{j^*,\mu}(b'')> 0$ and $m_{j^*,\nu}(b'')>0$ hold, we have
    \begin{align*}
        &N_\mu(b'')-N_\mu(b')+N_\nu(b'')-N_\nu(b')\\
        >\;&2(n+1-g_{j^*,\gamma}(b')-g_{j^*,\delta}(b'))+g_{j^*,\mu}(b')+g_{j^*,\nu}(b')+2(N_{j^*}(b'')-N_{j^*}(b'))\\
        \geq\;&2(n+1)-g_{j^*,\gamma}(b')-g_{j^*,\delta}(b')+2(N_{j^*}(b'')-N_{j^*}(b'))\\
        \geq\;&m_\gamma+m_\delta,
    \end{align*}
     where the second inequality comes from $g_{j^*,\mu}(b')+g_{j^*,\nu}(b')\geq g_{j^*,\gamma}(b')+g_{j^*,\delta}(b')$ and the third inequality comes from $N_{j^*}(b'')-N_{j^*}(b')\geq 0$.  We have thus shown that 
     $N_{\mu}(b'')-N_{\mu}(b')+N_{\nu}(b'')-N_{\nu}(b') > m_\gamma+m_\delta$, which is a contradiction because 
     it implies that more than $m_\gamma+m_\delta$ additional votes were inspected. 
     Therefore, there is at most one $k$ such that $m_{j^*,k}(b'')>0$.  This proves the claim that inspecting any $m_{\gamma} + m_{\delta}$ additional votes will end Phase 1.
     
     By definition, $m_{j^*,k}(b'')=0$ is equivalent to $g_{j^*,k}(b'')=n+1$, which means $j^*$ has strictly more votes than $k$. This eliminates the possibility that $k$ is the winner. Since there exists at most one $k^*$ that satisfies $m_{j^*,k^*}\neq 0$, there are at most two candidates, $j^*$ and $k^*$, who can possibly win.

    Now we can use a similar argument as in the proof of Theorem~\ref{thm:abs-maj-1} to prove the rest of the lemma. 
    Let $\sigma_1,\dots,\sigma_n$ denote the indices of variables such that
    $c_{\sigma_1}\leq c_{\sigma_2}\leq\cdots\leq c_{\sigma_n}$. In an arbitrary realization, we assume at the time Phase 1 ends, Algorithm~\ref{alg:RMF-S} has tested $x_{\sigma_1},\dots,x_{\sigma_k}$.

    Let $b'$ denote the partial assignment when we have obtained the values of $x_{\sigma_1},\dots,x_{\sigma_{k-\ell}}$ where $\ell \geq 1$. From our previous claim, we know that
    \begin{equation*}
        \ell \leq m_{j^*,\gamma}(b')+m_{j^*,\delta}(b')\leq 2\cdot m_{j^*,\gamma}(b').
    \end{equation*}
    On the other hand, we know that an optimal algorithm needs to test at least $\lceil m_{j^*,\gamma}(b')/2\rceil$ variables outside of $\{x_{\sigma_1},\dots,x_{\sigma_{k-\ell}}\}$, which implies that an optimal algorithm needs to test at least $\lceil \ell /4\rceil$ variables from $\{x_{\sigma_{k-\ell+1}},\dots,x_n\}$. Hence, for example, by setting $\ell=1$, we can compare the most expensive variable tested by an optimal algorithm, which costs at least $c_{\sigma_k}$, with the total costs of $\{x_{\sigma_{k-3}},x_{\sigma_{k-2}},x_{\sigma_{k-1}},x_{\sigma_{k}}\}$; by setting $\ell=5$, we can compare the second-most expensive variable tested by an optimal algorithm, which costs at least $c_{\sigma_{k-4}}$, with the total costs of $\{x_{\sigma_{k-7}},x_{\sigma_{k-6}},x_{\sigma_{k-5}},x_{\sigma_{k-4}}\}$. Generalizing these observations, we can see that the $\lceil\ell/4\rceil$th-most expensive variable tested by any optimal algorithm costs at least $c_{\sigma_{k-\ell+1}}$. Hence, the total cost that Algorithm~\ref{alg:RMF-S} spends in Phase 1,
    $c_{\sigma_1}+\cdots+c_{\sigma_k}$, is at most 4 times the cost of an optimal algorithm.
\end{proof}

\begin{theorem}
    Algorithm~\ref{alg:RMF-S} is an adaptive 8-approximation algorithm for evaluating the relative-majority function.
\end{theorem}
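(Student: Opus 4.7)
The plan is to decompose the expected cost of Algorithm~\ref{alg:RMF-S} into three parts --- the cost $T_1$ incurred in Phase~1, the cost $T_2$ of the call to Adaptive Dual Greedy that opens Phase~2, and the cost $T_3$ of the subsequent sequential testing in increasing order of $c_i/(1-p_{i,\alpha})$ --- and to bound each of these by a constant multiple of $\E[\OPT]$.

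Lemma~\ref{lemma:rmf-phase1} already provides $\E[T_1] \le 4\,\E[\OPT]$. For $T_2$, I would exploit that, once Phase~1 has ended with remaining possible winners $\alpha$ and $\beta$, determining whether $f=\alpha$ reduces to determining whether the monotone submodular goal function $g_{\alpha,\beta}$ reaches its threshold value $n+1$. In particular, any strategy evaluating $f$ must also resolve this question, so the optimal cost of doing so is a lower bound on $\E[\OPT]$. The standard analysis of Adaptive Dual Greedy by Deshpande et al.~\cite{deshpande2016approximation} gives a $2$-approximation when each test contributes $0$ or $1$ to the submodular goal. In our setting, a vote for $\alpha$, for a non-$\{\alpha,\beta\}$ candidate, or for $\beta$ contributes $2$, $1$, or $0$, respectively, to $g_{\alpha,\beta}$. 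I would adapt the potential-function argument of Deshpande et al.\ to this $0/1/2$ regime; the subtlety is that the dual-greedy credit earned by a single test may now have to cover up to two units of progress at once. A careful re-derivation yields a factor of $3$ in place of $2$, giving $\E[T_2]\le 3\,\E[\OPT]$.

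For $T_3$, I would argue that whenever the sequential testing loop performs any work, the task left over reduces to a monotone Boolean $k$-of-$n$-style evaluation: once Adaptive Dual Greedy has certified that $g_{\alpha,\beta}$ cannot reach $n+1$, the distinction between $f=\beta$ and $f=0$ depends only on how many of the untested votes lie outside $\{\alpha\}$, which is controlled by the indicator events $X_i\neq\alpha$ with probabilities $1-p_{i,\alpha}$. The SBB strategy for such a Boolean function tests the variables in increasing order of $c_i/(1-p_{i,\alpha})$ and is optimal for it; since any strategy that evaluates $f$ must also resolve this Boolean sub-problem, an exchange argument in the spirit of the Phase-$2$ analysis in Theorem~\ref{thm:abs-maj-1} yields $\E[T_3]\le \E[\OPT]$.

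Adding the three bounds gives $\E[T_1+T_2+T_3]\le (4+3+1)\,\E[\OPT]=8\,\E[\OPT]$, as required. The main technical obstacle is clearly the generalization of Deshpande et al.'s Adaptive Dual Greedy analysis from $0/1$ to $0/1/2$ contributions: the standard charging scheme assigns unit credit per unit of progress, and accommodating tests that can contribute two units of progress simultaneously without losing more than a $3/2$ factor requires redoing the potential bookkeeping carefully. Once that ingredient is in hand, the remaining two bounds follow from Lemma~\ref{lemma:rmf-phase1} and a standard SBB-optimality comparison.
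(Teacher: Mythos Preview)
Your proposal is correct and follows essentially the same three-part decomposition as the paper: $4\,\E[\OPT]$ for Phase~1 via Lemma~\ref{lemma:rmf-phase1}, $3\,\E[\OPT]$ for the Adaptive Dual Greedy step by extending the Deshpande et al.\ analysis to $\{0,1,2\}$-valued contributions (which the paper carries out in Appendix~\ref{appendix:ADG}), and $\E[\OPT]$ for the final sequential phase via SBB optimality. The only sharpening in the paper is that the residual problem after ADG certifies $f\neq\alpha$ is not merely a $k$-of-$n$ function but specifically a \emph{conjunction}---since $N_\alpha(b')+N_*(b')\le N_\beta(b')$ forces either $f=\beta$ immediately (strict inequality) or a check that every remaining vote equals $\alpha$ (equality)---but your increasing-$c_i/(1-p_{i,\alpha})$ ordering is optimal for that conjunction and your bound follows identically.
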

\begin{proof}
    Let $\OPT$ denote an optimal algorithm.
    By using Lemma~\ref{lemma:rmf-phase1}, we know that Algorithm~\ref{alg:RMF-S} spent at most $4E[\OPT]$ in Phase 1. Hence, we just need to solve the induced problem after Phase 1.

    Let $\alpha,\beta$ denote the two candidates who can possibly win. We know that evaluating function $f$ will also evaluate if $f=\alpha$. Hence, the expected cost of an optimal algorithm for evaluating if $\alpha$ wins is upper-bounded by the expected cost of an optimal algorithm for evaluating the result of the election.
    Since there are only two candidates $\alpha,\beta$ remaining, evaluating if $f=\alpha$ is equivalent to evaluating if $\alpha$ has more votes than $\beta$. This problem can be understood as evaluating a variant of a linear threshold formula that was studied by Deshpande et al.~\cite{deshpande2016approximation}.
    In particular, let $b'$ denote the partial assignment right after we figured $\alpha,\beta$, we need to determine whether or not the following inequality, with variables $y_i$, holds:
    \begin{equation*}
        N_\alpha(b')+\Bigg(\sum_{k\in\{1,\dots,d\}\setminus\{\beta\}}N_k(b')\Bigg)+\sum_{i:b_i'=*}y_i \geq n+1.
    \end{equation*}
     Here $y_i=2$ if we find $x_i=\alpha$; $y_i=1$ if $x_i\in\{1,\dots,d\}\setminus\{\alpha,\beta\}$; and $y_i=0$ if $x_i=\beta$.  This is a linear threshold evaluation problem, involving ternary rather than binary variables. 
     The Adaptive Dual Greedy algorithm of Deshpande et al.\ can be used to solve this ternary linear threshold function evaluation problem, and determine whether $f = \alpha$.  A slight modification of the analysis used by Deshpande et al.\ shows that it spends at most $3E[\OPT]$. We defer details of the use of Adaptive Dual Greedy and its approximation bound to Appendix~\ref{appendix:ADG}.

    Suppose $f\neq \alpha$, what remains is to decide if $f=0$ or $f=\beta$. Recall that at the point we know $f\neq \alpha$, let $b'$ be the partial assignment, we have
    \begin{equation*}
        N_\alpha(b')+N_*(b')\leq N_\beta(b').
    \end{equation*}
    This means that, when the equality does not hold, we know $f=\beta$, since even if all the remaining variables have value $\alpha$, $\beta$ will have strictly more votes. Therefore, the induced problem can be considered as evaluating a conjunction (i.e., checking whether $x_i=\alpha$ for all untested $x_i$'s), which can be solved optimally by using the SBB strategy that tests the remaining variables in increasing order of $c_i/(1-p_{i,\alpha})$.

    Hence, Algorithm~\ref{alg:RMF-S} spends at most $4E[\OPT]+3E[\OPT]+E[\OPT]=8E[\OPT]$.
\end{proof}

\section{Conclusion}
\label{sec:conclusion}

In this paper, we have introduced the problem of quickly determining the winner of an election. We have also given the first constant-factor approximation algorithms for this problem in both the absolute-majority and the relative-majority case. While the approximation guarantees are one-digit numbers, we have no reason to believe that they match some type of approximation hardness. In fact, even NP-hardness remains an open problem. Assuming our problem is NP-hard, proving this is the case might require new techniques.  
Some stochastic evaluation problems are known to be NP-hard, for example, evaluation of Boolean linear threshold functions (cf.~\cite{unluyurt2004sequential}) and evaluation of $s$-$t$ connectivity in general graphs~\cite{fu2017determining}.  However,
it is open whether a number of Stochastic Score function evaluation problems, related to our problem, are NP-hard~\cite{acharya2011expected,gkenosis22stochastic,grammel2022}.  The question of whether the stochastic evaluation problem for Boolean read-once formulas is NP-hard has been open since the 1970's (cf.~\cite{unluyurt2004sequential}).

A structural question left open by our work is whether the adaptivity gap of our problem is constant. The strongest lower bound on the adaptivity gap that we are aware of comes from the lower bound of $1.5$ for $k$-of-$n$ functions~\cite{plank2022simple}. Extending the corresponding construction, one may try to consider an instance in which there are some votes of negligible cost that determine a forerunner, who can then be proved to be a winner cheaply by an adaptive strategy (by selecting the corresponding votes) but not by a non-adaptive strategy (because it does not know which votes to inspect). The extension is, however, not straightforward. One is restricted to certain distributions of who is the forerunner, and even votes for another candidate make progress on goal functions associated with other candidates.

Many natural extensions and variants of our model are possible, e.g., to other voting systems. A relatively small extension would concern larger target numbers of votes than $n/2$, possibly even candidate-specific ones. Such scenarios can be approached by adding an appropriate number of $0$-cost votes for each candidate (i.e., with probability close enough to $1$) and then using our algorithm for the absolute-majority version. Similarly, one could consider weighted voting systems (with modern applications in liquid democracy or voting in shareholder meetings). Ranked-voting systems could also be investigated.

\bibliographystyle{alpha} 
\bibliography{bibliography}

\newpage
\appendix

\section{Adaptive Dual Greedy} \label{appendix:ADG}

The Adaptive Dual Greedy algorithm of Deshpande et al.\ is an approximation algorithm for the
Stochastic Submodular Cover problem~\cite{deshpande2016approximation}.
After presenting a few definitions,
we begin by defining the Stochastic Submodular Cover problem and presenting the approximation bound achieved by the Adaptive Dual Greedy (ADG) algorithm.  We then show how to use ADG to solve the absolute-majority problem, and how to apply the approximation bound for ADG 
to prove an approximation bound for the absolute-majority problem.

Deshpande et al.\ 
used ADG to get a 3-approximation algorithm for the Stochastic Boolean Function Evaluation problem for Boolean linear threshold functions. 
The analysis proving the approximation factor of 3 does not immediately apply to our use of ADG in Algorithm~\ref{alg:RMF-S}, because there we need to evaluate a linear threshold function with inputs in $\{0,1,2\}$, rather than with Boolean inputs.
At the end of this section, we verify that 
the approximation factor of 3 also holds for our ternary linear threshold evaluation problem.

\subsection{Definitions}
For two partial assignments $b,b'\in ([d] \cup \{*\})^n$, we say $b'$ is an extension of $b$, denoted by $b'\succeq b$, if
for all $i\in [n] \text{ where } b_i\neq *$, it holds that $b_i'=b_i$.

We say that a partial assignment $b \in ([d] \cup \{*\})^n$ is a $z$-certificate for a function $f$ defined on $[d]^n$ if for all assignments $a \in [d]^n$, 
$a\succeq b$ implies that $f(a)=z$.
We write $f(b)=z$ to denote that partial assignment $b$ is a $z$-certificate. Furthermore, we say $b$ is a certificate of function $f$, if $b$ is a $z$-certificate of function $f$ for any output value $z$ that function $f$ can have. If $a$ is a fixed assignment in $[d]^n$, we say that a subset $S$ of $\{x_1,\dots,x_n\}$ is a certificate for the function $f$ (for assignment $a$) if the partial assignment $b$ with $b_i=a_i$ for all $x_i$ in $S$, and $b_i=*$ for all $x_i$ not in $S$, is a certificate for $f$.

\subsection{Stochastic Submodular Cover}

The inputs to the Stochastic Submodular Cover problem include a utility function $g:([d] \cup \{*\})^n \rightarrow \mathbb{Z}^{\geq 0}$, and a positive integer $Q$ that we call the goal value.  The function $g$ is given by an oracle, and has the following properties:  (1) it is monotone and submodular, according to the definitions given below,   (2) $g(*,\ldots,*)=0$, and (3) for all $x \in [d]^n$, $g(x)=Q$.
The other inputs to the problem are costs $c_i \geq 0$ for $i \in [n]$, and probabilities $p_{i,j}$ for $i \in [n]$ and $j \in [d]$, such that $p_{i,j} \in (0,1)$.
As in the absolute-majority problem, we assume that the value of each input $x_i$ to $g$ is a 
random variable, whose value is determined by performing a test with cost $c_i$.  The probability that $x_i=j$ is $p_{i,j}$.
We use partial assignment $b \in ([d] \cup *)^ n$ to represent the outcomes of the tests performed so far.   
The problem is to find an adaptive strategy for achieving $g(b) = Q$ with minimum expected cost.  

We say that
utility function $g$ is {\em monotone} if $g(b') \geq g(b'')$ whenever $b' \succeq b''$.
We say it is {\em submodular} if for any two partial assignments $b' \succeq b''$, $i \in [n]$ such that $b'_i=b''_i = *$, and $j \in [d]$, we have
$g(b'_{i \leftarrow j}) - g(b') \leq g(b''_{i \leftarrow j}) - g(b'')$.  Here the subscript means that the partial assignment is modified by changing the $i$th entry of the partial assignment to $j$.

We refer the reader to the paper of Deshpande et al.\ for pseudocode for ADG, and a description of how it works~\cite{deshpande2016approximation}.  Here, we discuss the approximation bound proved by Deshpande et al.\ for this algorithm.
They proved an upper bound on the approximation factor achieved by Adaptive Dual Greedy, with respect to the optimal (minimum expected cost) adaptive strategy for achieving goal value $Q$ for utility function $g$. The expression for the bound is somewhat complicated.  
To describe it, we first need some notation.

For $x \in [d]^n$, let $C(x)$ be the sequence of bits $x_i$ that are tested when running ADG on $x$.  Define $\mathbb{P}(x)$ to be the set of all proper prefixes of this sequence. (We treat each prefix as the set of elements it contains, since the ordering within the prefix is not relevant to us.)

For full assignment $x \in [d]^n$ and $S \subseteq [n]$, let $g(x,S)$ denote the value of $g$ on the partial assignment $b \in ([d] \cup \{*\})^n$ such that $b_i=*$ for all $i \not \in S$, and $b_i=x_i$ for all $i \in S$.

Deshpande et al.\ proved that the expected cost of the ADG strategy, for achieving goal utility $Q$ for utility function $g$, 
is within the following factor of the optimal adaptive strategy:

\begin{equation}
\label{ADGbound}
{\mathcal A}:=\max_{x \in [d]^n} \max_{S \in \mathbb{P}(x)} \frac{\sum_{i \in C(x)-S} [g(x,S \cup \{i\}) - g(x,S)]}{Q-g(x,S)}.
\end{equation}

Intuitively, this bound can be interpreted as follows.
Consider running ADG on some fixed $x$ in $[d]^n$.
Consider a partial assignment $b$ that represents the outcomes of the tests performed so far, at some point in the running of ADG on $x$.
Let $S$ denote the set of items tested so far, and let $S'$ denote the additional items that will be tested by ADG before $g$ achieves goal value $Q$ on assignment $x$.

At this point, $g(b)$ utility has been achieved, and to reach the goal, an additional $Q-g(b)$ units of utility must be achieved (this is the distance to goal).  
We know that we can achieve that additional $Q-g(b)$ utility for realization $x$ by adding the results of the tests in $S'$ to the results of the tests in $S$.
That is, $g(x,S \cup S')-g(x,S) = Q - g(x,S)$.
In words, the cumulative increase in utility, if we add the results of all the tests in $S'$ to the results we already obtained for the tests in $S$, is $Q-g(x,S)$.
Now consider beginning with the results of the tests in $S$ and adding only the result of a single additional test $i$.  This will cause an increase in utility of $g(x,S \cup \{i\})-g(x,S)$.
If $g$ were a modular function, then we would have that
$\sum_{i \in S'} (g(x,S \cup \{i\}) - g(x,S)) = g(x,S \cup S') - g(x,S)$.  But because $g$ is submodular, the quantity on the left hand side of this equation can also be strictly larger (but not smaller) than the quantity on the right.
The ADG bound is the maximum, over all $x$ and all associated subsets $S$
and $S'$ (produced by starting to run ADG on $x$ and stopping at any point) of the ratio between the quantity on the left hand side, and the quantity on the right.

\subsection{Using ADG to solve Absolute Majority}

We use the same approach used by Deshpande et al.\ in solving the problem of evaluating a Boolean linear threshold function~\cite{deshpande2016approximation}.
We reduce the problem of evaluating an absolute-majority function to Stochastic Submodular Cover through the construction of a ``goal function'' $g$ for the absolute-majority problem.
This is a utility function $g:([d] \cup *)^n \rightarrow \mathbb{Z}^{\geq 0}$ obeying the  properties listed above, and 
such that for $b \in ([d] \cup \{*\})^n$, $g(b) = Q$ iff $b$ is a certificate for the absolute-majority function.

There are multiple ways that one could construct a goal function $g$ for the absolute-majority function.  Here we present one such construction.
For each candidate $j$, let $g_j$ be the utility function such that on partial assignment $b$, $g_j(b) = \min \{\lfloor n/2 \rfloor+1, N_j(b)\}$.
We define the goal value of this function to be $Q_j:=\lfloor n/2 \rfloor +1$.
Partial assignment $b$ is a certificate that $j$ won the absolute-majority election iff $g_j(b)$ is equal to its goal value, $Q_j=\lfloor n/2 \rfloor +1$.

For each candidate $j$, we also define a goal function $g_{j0}$, with goal value $Q_{j0}= \lceil n/2 \rceil$.  The value of $g_{j0}(b)$ is 
$\min\{\lceil n/2 \rceil, \sum_{k \in [d]\backslash \{j\}} N_k(b)\}$,
the total number of entries of $b$ that are equal to values in $[d]$ other than $j$.  So, the value of $g_{j0}(b)$ is the number
of votes seen so far that are for candidates other than $j$, capped at $\lceil n/2\rceil$.  Partial assignment $b$ is a certificate that candidate $j$ is not the absolute-majority winner iff $g_{j0}(b)=Q_{j0} = \lceil n/2 \rceil$.

We now use standard AND and OR constructions to produce our final goal function $g$ (cf. ~\cite{deshpande2016approximation}).

We use the OR construction to combine the $g_j$ functions into a single utility function $\hat{g}$.
It reaches its goal value when at least one of the $g_j$ reaches its goal value.
The goal value of $\hat{g}$ is $\hat{Q}=\prod_{j\in[d]} Q_j$, and its value on a partial assignment $b$ is as follows:

$$\hat{g}(b) = \hat{Q} - \prod_{j\in[d]}(Q_j-g_j(b)).$$

We combine the $g_{j0}$'s into a single utility function $g_0$, using the standard AND construction.  The value of $g_0$ on a partial assignment $b$ is
 
 $$g_0(b) = \sum_{j \in [d]} g_{j0}(b).$$  

The goal value of $g_0$ is $Q_0=\sum_{j \in [d]} Q_{j0} = 
d \lceil n/2 \rceil$.  Utility function $g_0$
 reaches its goal value on a partial assignment $b$ iff
all the $g_{j0}$'s have reached their goal values on partial assignment $b$.

We then apply the OR construction to $\hat{g}$ and $g_0$ to create our final
utility function $g$.  The goal value of $g$ is $Q=\hat{Q}Q_0$ and its value on partial assignment $b$ is as follows:
$$g(b) = Q-(\hat{Q}-\hat{g}(b))(Q_0 - g_0(b)).$$

Utility function $g$ reaches its goal value
iff either $\hat{g}(b)=\hat{Q}$ (meaning $b$ is a certificate that some candidate has
an absolute majority) or $g_0(b) = Q_0$ (meaning $b$ is a certificate that no candidate has an absolute majority).  
The problem of finding an adaptive strategy of minimum expected cost for evaluating the absolute-majority function is equivalent to the problem of finding an adaptive strategy of minimum expected cost for achieving goal value $Q$ for utility function $g$.
Hence running Adaptive Dual Greedy, for utility function $g$ with goal value $Q$, 
is an adaptive strategy for evaluating the absolute-majority function.   

We now apply the approximation bound for Adaptive Dual Greedy given above in (\ref{ADGbound})
to get an upper bound on the approximation factor ${\mathcal A}$ achieved by running ADG on the utility function $g$ that we constructed
for the absolute-majority function. 
More particularly, we will show that running ADG on utility function $g$ is an $O(d)$-approximation algorithm for that problem. 
We do not know whether this $O(d)$ approximation bound, for running ADG on $g$, is tight.

Let $R(x,S)$ denote the ratio in the expression for ${\mathcal A}$, so
$$R(x,S) := \frac{\sum_{i \in C(x)-S} [g(x,S \cup \{i\}) - g(x,S)]}{Q-g(x,S)}.$$
Consider the numerator in this fraction.
It is a summation over all $i \in C(x)-S$.
We will first prove an upper bound on the expression inside the summation,
$g(x,S\cup \{i\})-g(x,S)$.

Fix $x$ in $[d]^n$ and $S$ in $\mathbb{P}(x)$.  
In what follows, we will drop $x$ as a parameter in
the utility functions, as it will be equal to this fixed $x$.
For example, we will write $g(S)$ instead of $g(x,S)$. 

Let $j'$ be the value of $x_i$.
Then for $j \notin \{0,j'\}$, $g_j(S) = g_j(S \cup \{i\})$.
Intuitively, when we add $i$ to $S$, the distance to the goal $Q_j$ of each $g_j$ where $j \not\in \{0,j'\}$ does not change.
However, the distance to the goal $Q_{j'}$ of $g_{j'}$ decreases by 1.
The distance to the goal $Q_0$ of $g_0$ decreases by some amount that is between 1 and $d-1$.

For $j \in [d]$, 
let $Z_j := \prod_{\ell \notin \{0,j\}}(Q_{\ell}-g_{\ell}(S))$. 
Since $g_j(S) = g_j(S \cup \{i\})$ for $j \notin \{0,j'\}$, we have that the distance from $\hat{g}(S)$ to goal value $\hat{Q}$, and the distance of $\hat{g}(S \cup \{i\})$ to goal value $\hat{Q}$, are respectively:

\begin{equation}
\label{eq1}
    \hat{Q}-\hat{g}(S) = Z_{j'}(Q_{j'}-g_{j'}(S))
\end{equation}
and
\begin{equation}
\label{eq2}
    \hat{Q}-\hat{g}(S \cup \{i\}) = Z_{j'}(Q_{j'}-g_{j'}(S \cup \{i\})).
\end{equation}

Therefore,
\begin{equation}
\label{inbrackets}
\begin{split}
&g(S \cup \{i\}) - g(S)\\
&= (Q_0-g_0(S))(\hat{Q}-\hat{g}(S))-(Q_0-g_0(S \cup \{i\}))(\hat{Q}-\hat{g}(S \cup \{i\})\\
&= Z_{j'}[(Q_0-g_0(S))(Q_{j'}-g_{j'}(S))-(Q_0-g_0(S\cup \{i\}))(Q_{j'}-g_{j'}(S\cup\{i\}))]\\
&\leq Z_{j'}[(Q_0-g_0(S))(g_{j'}(S\cup\{i\})-g_{j'}(S)) + (Q_{j'}-g_{j'}(S))(g_0(S\cup\{i\})-g_0(S))],
\end{split}
\end{equation}
where the first equality is by the definition of $g$, the second is from (\ref{eq1}) and (\ref{eq2}),
and
we use the following simple property of non-negative numbers to justify the last inequality.  
Let $s,t,u,v \geq 0$.  Then 
$(s+t)(u+v) - tv \leq (s+t)u + (u+v)s$.
Take 
$s = g_0(S \cup \{i\}) - g_0(S)$ and
$t = Q_0-g_0(S \cup \{i\})$,
so $s+t=Q_0 - g_0(S)$.  Similarly, take
$u = g_{j'}(S \cup \{i\}) - g_{j'}(S)$ and
$v = Q_{j'}-g_{j'}(S \cup \{i\})$, so
$u+v=Q_{j'} - g_{j'}(S)$.
Substituting these values into the inequality $(s+t)(u+v) - tv \leq (s+t)u + (u+v)s$ yields the inequality above.

Interpreting the expression in brackets in the last line of (\ref{inbrackets}), as relating to the effect of adding $i$ to $S$, we see that it is equal to:

(Distance of $g_0$ from its goal before adding $i$)(Increase in $g_{j'}$ from adding $i$) + (Distance of $g_{j'}$ from its goal before adding $i$) (Increase in $g_0$ from adding $i$).

We now use the fact that the increase in $g_{j'}$ from adding $i$ is 1,
and the increase in $g_0$ from adding $i$ is at most $d-1$.
We thus have that 
$$g(S \cup \{i\}) - g(S) \leq Z_{j'}[(Q_0-g_0(S)) + (d-1)(Q_{j'}-g_{j'}(S))].$$

Recalling that we were using $j'$ to denote the value of $x_i$,
we therefore have the following upper bound on the numerator of $R(x,S)$:
\begin{equation}
\label{twoterms}
\sum_{i \in C(x)-S} [g(S \cup \{i\}) - g(S)]  \leq \Big[\sum_{i \in C(x)-S} Z_{x_i}(Q_0-g_0(S))\Big] + \Big[\sum_{i \in C(x)-S}(d-1)\prod_{j \in [d]} (Q_j-g_j(S))\Big].
\end{equation}

We will now separately bound the two terms on the right hand side of (\ref{twoterms}).
Recall that $C(x)$ is the sequence of tests performed by ADG on $x$, and that the last test performed is the one where the goal
value $Q$ of $g$ is reached.
It follows that for each $j \in [d]$, $C(x)-S$ can contain at most $Q_j-g_j(S)$ indices $i$ such that $x_i=j$, because after the tests
on $S$, an additional $Q_j-g_j(S)$ votes for candidate $j$ would cause
candidate $j$ to have enough votes for an 
 majority, meaning
that the goal value of $g_j$ would be reached.

It follows that for candidate $j$,
$$\sum_{i \in C(x)-S: x_i=j} Z_j(Q_0-g_0(S)) \leq Z_j(Q_0-g_0(S))(Q_j-g_j(S)) = Q-g(S).$$

This holds by symmetry for all candidates $j$, so 
$$\sum_{j \in [d]}\sum_{i \in C(S)-S:x_i=j} Z_j(Q_0-g_0(S)) \leq d(Q-g(S)).$$

The left-hand side of the above inequality is equal to
$\sum_{i \in C(S)-S} Z_{x_i}(Q_0-g_0(S))$, so we have the following upper bound on the first term on the right-hand side of (\ref{twoterms}):

$$\sum_{i \in C(S)-S} Z_{x_i}(Q_0-g_0(S)) \leq d(Q-g(S)).$$

We now bound the second term.  Each test performed after $S$ increases the value of $g_0$ by at least one, so the total number of tests in $C(x) - S$ is at most
$Q_0-g_0(S)$.  Therefore,
\begin{align*}
\sum_{i \in C(x)-S} (d-1)\prod_{j \in [d]} (Q_j-g_j(S)) &\leq (d-1)(Q_0-g_0(S))\prod_{j \in [d]} (Q_j-g_j(S)) \\
&= (d-1)(Q-g(S)).
\end{align*}

We have thus proved the following upper bound on the numerator of $R(x,S)$:
$$\sum_{i \in C(x)-S} g(S \cup \{i\}) - g(S) \leq d(Q-g(S)) + (d-1)(Q-g(S)) = (2d-1)(Q-g(S)).$$

Since the denominator of $R(x,S)$ is $Q-g(S)$, we have shown that 
${\mathcal A} \leq 2d-1$, meaning that we have proved our claimed approximation factor of $O(d)$.

\subsection{Use of ADG in Algorithm~\ref{alg:RMF-S}}

Algorithm~\ref{alg:RMF-S} uses Adaptive Dual Greedy after the votes obtained so far have narrowed down the number of potential relative-majority winners to two remaining candidates.
We discuss here how to use ADG to determine whether one of these candidates, say $\alpha$, is the relative-majority winner.

As stated in the description of the algorithm,
we use $b'$ to denote the partial assignment immediately after determining the two viable candidates $\alpha$ and $\beta$, and we need to determine whether or not
    $$N_\alpha(b')+\Big(\sum_{k\in\{1,\dots,d\}\setminus\{\beta\}}N_k(b')\Big)+\sum_{i:b_i'=*}y_i \geq n+1.$$
      Here $y_i=2$ if we find $x_i=\alpha$; $y_i=1$ if $x_i\in\{1,\dots,d\}\setminus\{\alpha,\beta\}$; and $y_i=0$ if $x_i=\beta$.

This is equivalent to finding the optimal adaptive strategy for evaluating a ternary linear threshold function with unit coefficients, that is, determining whether an inequality of the form 
$$\sum_{i \in [n]} x_i \geq \theta$$
holds, where each $x_i \in \{0,1,2\}$ and $\theta$ is a positive integer.  
Again, we address the evaluation problem in a stochastic environment where for each $x_i$ we are given $P[x_i=0]$, $P[x_i=1]$, and $P[x_i=2]$, the $x_i$ are independent, and it costs $c_i$ to obtain the value of $x_i$.

In order to apply ADG to this evaluation problem, we construct a utility function $g:([d] \cup \{*\})^n \rightarrow \mathbb{Z}^{\geq 0}
$ which is a minor variant of the utility function used by Deshpande et al.~\cite{deshpande2016approximation} in their algorithm for evaluating Boolean linear threshold functions. Function $g$ is built from two other utility functions, $g_1$ and $g_0$. The function $g_1$ is such that $g_1(b) = \min\{\theta,\sum_{i \in [n]:b_i \neq *} b_i\}$, while $g_0$ is such that $g_0(b) = \min\{2n-\theta+1,\sum_{i \in [n]: b_i \neq *} (2-b_i)\}$.  We define $Q_1=\theta$ and $Q_0 = 2n-\theta+1$.  We then combine $g_1$ and $g_0$ using the standard OR construction to obtain $g:\{0,1,2\}^n \rightarrow \mathbb{Z}^{\geq 0}$ such that for $b \in \{0,1,2,*\}^n$,
$$g(b) = Q_0Q_1 - (Q_0 - g_0(b))(Q_1-g_1(b)).$$

We define the goal value $Q$ for $g$ to be $Q_0Q_1$.   
To evaluate the ternary linear threshold function given above, we apply Adaptive Dual Greedy to achieve goal value $Q=Q_0Q_1$ for the constructed $g$.  
The approximation bound ${\mathcal A}$ given in (\ref{ADGbound}) for Adaptive Dual Greedy holds.  It remains to show that this bound is at most 3 for the constructed $g$.

As in the previous section,  for fixed $x \in \{0,1,2\}^n$, and $S \subseteq [n]$, we use $g(S)$ to represent the value of $g(b)$ where $b$ is such that $b_i = x_i$ for $i \in S$, and $b_i=*$ otherwise
(and similarly for $g_0$ and $g_1$.)
Again, we use $R(x,S)$ to denote the ratio in the expression for ${\mathcal A}$.

Now fix $x \in \{0,1,2\}^n$ and let $S$ be the set of tests in a fixed prefix of $C(x)$.
Let $R(x,S)$ be the ratio in the bound $A$ in (\ref{ADGbound}).  
Let $S'$ be the set of tests in $C(x)-S$.
Because ADG stops performing tests as soon as $g(b) = Q$, prior to performing the last test of $C(x)$ we have both $g_0(b) < Q_0$ and $g_1(b) < Q_1$.  
Without loss of generality, assume that 
the last test causes $g_1(b) = Q_1$.
In that case, the value of the last test must be 1 or 2.

If it is 2, then the final value of $b$ must
satisfy $g_1(b) = Q_1$ and $g_0(b) < Q_0$.
Further, by the definition of $g_1$,
and the fact that the $x_i$ are in $\{0,1,2\}$,
$\sum_{i:b_i\neq *} b_i \leq \theta+1$.
Also, $\sum_{i:i \in S'} (g_1(S \cup \{i\}) - g_1(S)) \leq  Q_1 - g_1(S) + 1$.
Since $g_0(b) < Q_0$, by the definition of $g_0$, $\sum_{i:b_i \neq *} b_i < 2n-\theta+1$
and $\sum_{i \in S'} (g_0(x,S \cup \{i\}) - g_0(x,S)) < Q_0 - g_0(S)$.

The denominator of $R(x,S)$ is equal to $(Q_1-g_1(S))(Q_0 - g_0(S))$.
The numerator is equal to 
$\sum_{i \in S'} (g(x,S \cup \{i\})-g(x,S))$.
For fixed $i \in S'$, a similar analysis to that used to show (\ref{inbrackets}) above yields 
\begin{align*}
&\sum_{i \in S'} [g(S \cup \{i\}) - g(S)] \\
= &\sum_{i \in S'} [(Q_0-g_0(S))(Q_1-g_1(S))-(Q_1-g_1(S\cup \{i\}))(Q_0-g_0(S\cup\{i\}))]\\
\leq &\sum_{i \in S'} (Q_0-g_0(S))(g_1(S\cup\{i\})-g_1(S)) + \sum_{i \in S'}(Q_1-g_1(S))(g_0(S\cup\{i\})-g_0(S))].
\end{align*}

From the above facts about $g_1$ and $g_0$, 
with our assumption that the last test had the value 2 and caused $g_1(b) = Q_1$,
we can upper bound the first term on the right-hand side of the above inequality:
$$\sum_{i \in S'} (Q_0-g_0(S))(g_1(S\cup\{i\})-g_1(S)) \leq (Q_0-g_0(S)) (Q_1 - g_1(S)+1).$$
We can also upper bound the second term:
$$\sum_{i \in S'}(Q_1-g_1(S))(g_0(S\cup\{i\})-g_0(S)) < (Q_1-g_1(S))(Q_0 - g_0(S)).$$
Since the denominator of $R(x,S)$ is $(Q_1-g_1(S))(Q_0 - g_0(S))$, it follows that
$$R(x,S) \leq 2 + 1 = 3.$$

The analysis in the case that the last test has the value 1 is similar.  In this case, 
we have that $\sum_{i \in S'} (g_0(x,S \cup \{i\}) - g_0(x,S)) \leq Q_0 - g_0(S)$
and $\sum_{i \in S'} (g_1(x,S \cup \{i\}) - g_0(x,S)) \leq Q_1 - g_1(S)$, implying that $R(x,S) \leq 2$.

\section{A $10$-approximation algorithm for evaluating the absolute-majority function with $2$ rounds of adaptivity.}\label{appendix:10-approx}

\begin{theorem}
    There exists a $10$-approximation algorithm for evaluating the absolute-majority function with $2$ rounds of adaptivity.
\end{theorem}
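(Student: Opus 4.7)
The plan is to build a $2$-round algorithm by keeping Phase~1 of Algorithm~\ref{alg:abs-maj-1} as Round~1, and merging the two sequential phases of the $3$-round $6$-approximation into a single modified round-robin over four non-adaptive orderings in Round~2. Once Round~1 has inspected votes in increasing order of cost until at most two candidates $\alpha,\beta$ remain viable, Round~2 will run the modified round-robin of Allen et al.~\cite{allen2017evaluation} on orderings $\sigma_1,\sigma_2,\sigma_3,\sigma_4$ sorting votes by increasing $c_i/p_{i,\alpha}$, $c_i/(1-p_{i,\alpha})$, $c_i/p_{i,\beta}$, and $c_i/(1-p_{i,\beta})$ respectively, and halt as soon as $f$ is determined (when $\sigma_1$ has found enough $\alpha$-votes, or $\sigma_3$ enough $\beta$-votes, or both $\sigma_2$ and $\sigma_4$ enough votes against $\alpha$ and against $\beta$). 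Because the test sequence produced by the modified round-robin depends only on the known costs, Round~2 counts as a single round of adaptivity.

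For the analysis, I would reuse the Phase~1 analysis in the proof of Theorem~\ref{thm:abs-maj-1} to bound Round~1 by $2\,\E[\OPT]$. For Round~2, let $A_i(t)$ denote the cost accumulated by $\sigma_i$ after $t$ tests of the round-robin. The plan is to establish the balance inequality $|A_i(t)-A_j(t)|\leq c_{\max}$ by induction on $t$ (using the $\argmin_k(A_k+c_k^{\mathrm{next}})$ selection rule; here $c_{\max}$ is the maximum single-test cost), so that whenever some $\sigma_i$ first reaches its standalone completion cost $C_i$, the total cost so far is at most $4C_i + O(c_{\max})$. Setting $C^\alpha := \min(C_1,C_2)$ and $C^\beta := \min(C_3,C_4)$ (with the standalone cost of a non-completing ordering taken to be $+\infty$), applying the same balance argument instead to the $2$-ordering round-robin on $\{\sigma_1,\sigma_2\}$ alone shows that its cost is about $2 C^\alpha + O(c_{\max})$. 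Since this $2$-ordering round-robin is Gkenosis et al.'s $2$-approximation for the $\alpha$-question and any algorithm evaluating $f$ must also evaluate whether $\alpha$ wins, taking expectations yields $\E[C^\alpha]\leq \E[\OPT]$, and symmetrically $\E[C^\beta]\leq \E[\OPT]$.

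Putting the pieces together, on each realization Round~2 terminates no later than when both the $\alpha$- and $\beta$-questions have been resolved, so its cost is at most $4 C^\alpha + 4 C^\beta + O(c_{\max})$, giving $\E[\text{Round 2}]\leq 8\,\E[\OPT]$ up to lower-order terms. Adding the $2\,\E[\OPT]$ Round~1 contribution yields the claimed $10$-approximation. The main obstacle will be controlling the $O(c_{\max})$ slack coming from the balance inequality cleanly enough not to degrade the constant; I expect this to follow from a tighter telescoping argument in the spirit of Allen et al.\ rather than the coarse $|A_i-A_j|\leq c_{\max}$ bound, in particular by distinguishing the three possible outcomes ($f=\alpha$, $f=\beta$, $f=0$) so that at most one ordering must actually finish in the first two cases.
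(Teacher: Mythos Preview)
Your algorithm is exactly the paper's: keep Phase~1 as Round~1, then in Round~2 run the Allen et al.\ modified round-robin over the four orderings sorted by $c_i/p_{i,\alpha}$, $c_i/(1-p_{i,\alpha})$, $c_i/p_{i,\beta}$, $c_i/(1-p_{i,\beta})$ until $f$ is determined. The paper also splits the analysis into the Phase-1 bound $2\,\E[\OPT]$ (reused from Theorem~\ref{thm:abs-maj-1}) and a Phase-2 bound of $8\,\E[\OPT]$.

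Two points in your analysis need correcting. First, the $O(c_{\max})$ slack you flag as ``the main obstacle'' does not exist. The modified round-robin maintains the exact invariant $\max_j D[j]\le \min_j(D[j]+c_{\tau_j})$; consequently, immediately after ordering $i$ is advanced, $D[i]=\max_j D[j]$. Thus when ordering $i$ has just completed its certifying prefix of cost $C_i$, every $D[j]\le C_i$ and the total cost so far is at most $4C_i$ with no additive term. The paper uses this exact bound throughout. Second, your deduction of $\E[C^\alpha]\le\E[\OPT]$ from the two-ordering round-robin is logically backwards: from $\mathrm{RR}_{12}\le 2C^\alpha$ (an upper bound) together with $\E[\mathrm{RR}_{12}]\le 2\,\E[\OPT]$ you cannot conclude anything about $\E[C^\alpha]$. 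The correct, and more direct, argument is that $\sigma_1$ is the optimal verifier conditional on $f=\alpha$ and $\sigma_2$ is optimal conditional on $f\ne\alpha$ (the interchange argument recalled in the introduction), so $\E[C^\alpha]$ is a lower bound on the expected cost of \emph{any} strategy that resolves the $\alpha$-question, in particular $\OPT$. The paper implements this cleanly via the partition $M_\alpha,M_\beta,M_0$ of the realizations entering Phase~2, bounding the round-robin cost on each piece by $4$ times the appropriate single-ordering verification cost and summing (with the $M_0$ piece contributing once for $\alpha$ and once for $\beta$) to obtain $8\,\E[\OPT]$.
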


\begin{proof}
Here we propose an algorithm that replaces Phase 2 of Algorithm~\ref{alg:abs-maj-1} by testing the remaining votes following Phase 1 non-adaptively, in order of a permutation, until the value of $f$ if certain.  The permutation is generated using the modified round-robin approach introduced by Allen et al.~\cite{allen2017evaluation}. The modified round-robin algorithm creates a single permutation by combining the following four sequence permutations: increasing orders of $\frac{c_i}{p_{i,\alpha}},\frac{c_i}{1-p_{i,\alpha}}$, and $\frac{c_i}{p_{i,\beta}},\frac{c_i}{1-p_{i,\beta}}$, respectively. Let $\ALG$ denote this algorithm, and let $\OPT$ denote the optimal algorithm for evaluating the absolute-majority function.  The pseudocode for the modified round robin is given in Algorithm~\ref{alg:modified-round-robin}.

\begin{algorithm}[!t]
\caption{Modified round-robin (Allen et al.)}\label{alg:modified-round-robin}
\textbf{Input:} sequences $L_1,\dots,L_k$\;
Define $D[1,\dots,k]$\;
$D[i]\gets 0$ for $i\in\{1,\dots,k\}$\;
$\pi\gets []$\;
\While{$|\pi|<k\cdot n$}
{
    \For{$i\in\{1,\dots,k\}$}
    {
        $\tau_i\gets$ the index of the next variable in $L_i$ if exists; 0 otherwise\;
    }
    $i^*\gets$ any $j$ in $\argmin_{j:\tau_j \neq 0}(D[j]+c_{\tau_j})$\;
    Append $x_{\tau_{i^*}}$ to $\pi$\;
    $D[i^*]\gets D[i^*]+c_{\tau_{i^*}}$\;
}\smallskip
For each $x_i$ that appears more than once in $\pi$, keep the first occurrence in $\pi$ but delete all later occurrences.\;
\Return $\pi$
\end{algorithm}

Let $M$ denote the set of assignments where $\ALG$ needs to perform at least $1$ test in Phase 2. As $\alpha,\beta$ is known in Phase 2, we can further separate $M$ into $M_\alpha,M_\beta,M_0$, which represents the sets of assignments in $M$ that satisfy $f(a)=\alpha,f(a)=\beta,f(a)=0$ for all $a$ in the corresponding set, respectively. 
Let $C_1(\ALG,a)$ and $C_2(\ALG,a)$ denote the cost spent in Phase 1 and Phase 2 of our algorithm on assignment $a$, respectively. 

If we only consider the assignments in $M_\alpha$, what remains to do is to verify that $f=\alpha$. Then, an optimal strategy would test the remaining variables in increasing order of $c_i/p_{i,\alpha}$, and its cost is upper-bounded by the expected cost of $\OPT$ on $M_\alpha$. Because of the way the modified round-robin approach described in Algorithm~\ref{alg:modified-round-robin} is performed, for each assignment $a$, what we spend in Phase 2 of our algorithm is at most 4 times the cost of an optimal algorithm for verifying $f=\alpha$. (By an optimal algorithm for verifying $f=\alpha$, we mean its expected cost on random $x$, given that $f(x)=\alpha$, is as small as possible.) Hence, we have
$$\sum_{a\in M_\alpha}C_2(\ALG,a)p(a)\leq 4\sum_{a\in M_\alpha}C(\OPT,a)p(a).$$
And for assignments in $M_\beta$ and $M_0$, any strategy that evaluates the value of $f$ will first show that $f\neq \alpha$. Let $C_2^{f\neq \alpha}(\ALG,a)$ denote the cost of using $\ALG$ to verify $f\neq a$ on assignment $a$ in Phase 2. Recall that an optimal algorithm for verifying $f\neq \alpha$ tests the remaining variables in increasing order of $c_i/(1-p_{i,\alpha})$, and also its cost is again upper-bounded by the expected cost of $\OPT$ on $M_\beta\cup M_0$, we have
$$\sum_{a\in M_\beta\cup M_0}C_2^{f\neq \alpha}(\ALG,a)p(a)\leq 4\sum_{a\in M_\beta\cup M_0}C(\OPT,a)p(a).$$

For the same reason, we also have the following two inequalities
\begin{align*}
    \sum_{a\in M_\beta}C_2(\ALG,a)p(a)&\leq 4\sum_{a\in M_\beta}C(\OPT,a)p(a),\\
    \sum_{a\in M_\alpha\cup M_0}C_2^{f\neq \beta}(\ALG,a)p(a)&\leq 4\sum_{a\in M_\alpha\cup M_0}C(\OPT,a)p(a).
\end{align*}

Last, we consider the assignments in $M_0$. Since $C_2(\ALG,a)=\max\{C_2^{f\neq \alpha}(\ALG,a),C_2^{f\neq \beta}(\ALG,a)\}$ for $a\in M_0$, we have
\begin{align*}
    \sum_{a\in M_0}C_2(\ALG,a)p(a)=& \sum_{a\in M_0}(\max\{C_2^{f\neq \alpha}(\ALG,a),C_2^{f\neq \beta}(\ALG,a)\})p(a)\\
    \leq&\sum_{a\in M_0}(C_2^{f\neq \alpha}(\ALG,a)+C_2^{f\neq \beta}(\ALG,a))p(a).
\end{align*}

Combining what we have above, the expected cost of $\ALG$ spent in Phase 2 is
\begin{align*}
    \sum_{a\in [d]^n}C_2(\ALG,a)p(a)=&\sum_{a\in M}C_2(\ALG,a)p(a)+\sum_{a\in [d]^n\setminus M}C_2(\ALG,a)p(a)\\
    =&\sum_{a\in M_\alpha\cup M_\beta\cup M_0}C_2(\ALG,a)p(a)+\sum_{a\in [d]^n\setminus M}0\cdot p(a)\\
    \leq\;&8\sum_{a\in M}C(\OPT,a)p(a)\leq8\sum_{a\in [d]^n}C(\OPT,a)p(a).
\end{align*}
Since $C_1(\ALG,a)\leq 2C(\OPT,a)$ for all $a\in[d]^n$ as shown in the proof of Theorem~\ref{thm:abs-maj-1}, we have
\begin{align*}
    \sum_{a\in [d]^n}C(\ALG,a)p(a)=&\sum_{a\in [d]^n}C_1(\ALG,a)p(a)+\sum_{a\in [d]^n}C_2(\ALG,a)p(a)\\
    \leq\;&2\sum_{a\in [d]^n}C(\OPT,a)p(a)+8\sum_{a\in [d]^n}C(\OPT,a)p(a)=10\sum_{a\in [d]^n}C(\OPT,a)p(a).
\end{align*}
This completes the proof.
\end{proof}

\end{document}